\def\cl@chapter{\@elt {chapter}} 
\newcommand{\backrefnotcitedstring}{\relax}
\newcommand{\backrefcitedsinglestring}[1]{(Cited on page~#1.)}
\newcommand{\backrefcitedmultistring}[1]{(Cited on pages~#1.)}
		   \renewcommand*{\backref}[1]{}  
		   \renewcommand*{\backrefalt}[4]{
		      \ifcase #1 %
		         \backrefnotcitedstring%
		      \or%
		         \backrefcitedsinglestring{#2}%
		      \else%
		         \backrefcitedmultistring{#2}%
		      \fi}%
\newcommand{\etal}{et~al.\xspace}
\theoremstyle{definition}
\newtheorem{observation}{Observation}
\theoremstyle{definition}
\newtheorem{myRest}{Restriction}
\crefname{myRest}{Restriction}{Restrictions}
\theoremstyle{plain}
\newtheorem{lemma}{Lemma}
\theoremstyle{definition}
\newtheorem{definition}{Definition}
\theoremstyle{plain}
\newtheorem{corollary}{Corollary}
\theoremstyle{plain}
\newtheorem{theorem}{Theorem}
\title{%
  \MakeUppercase{On the Restricted \texorpdfstring{$k$}{k}-Steiner Tree Problem}%
  \thanks{This research was funded in part by the Natural Sciences and Engineering Research Council of Canada (NSERC). A preliminary version of this paper appeared in COCOON $2020$ under the title ``On  the  restricted  $1$-Steiner  tree  problem''  \cite{DBLP:conf/cocoon/BoseDD20} (available online at \url{https://doi.org/10.1007/978-3-030-58150-3_36}), and the full version (published by Springer Nature in $2021$, issue date $2022$) in the Journal of Combinatorial Optimization (JOCO) under the title ``On the Restricted \texorpdfstring{$k$}{k}-Steiner Tree Problem'' \cite{DBLP:journals/jco/BoseDD22} (available online at \url{https://doi.org/10.1007/s10878-021-00808-z}). A full version of this paper also appeared as a chapter in Anthony D'Angelo's thesis \cite{anthonydangelo2023thesis}. This paper has undergone minor corrections since publication.}
}
\author{%
  Prosenjit~Bose,%
  \thanks{\affil{School of Computer Science, Carleton University, Ottawa, Canada}, 
          \email{jit@scs.carleton.ca},
          \url{http://www.scs.carleton.ca/\~jit}}\,
  Anthony~D'Angelo,%
  \thanks{\affil{Carleton University, Ottawa, Canada},
          \email{anthony.dangelo@carleton.ca}}\,
  Stephane~Durocher%
  \thanks{\affil{University of Manitoba, Winnipeg, Canada}, 
          \email{durocher@cs.umanitoba.ca}}
}
\begin{document}

\maketitle

\begin{abstract}
Given a set $P$ of $n$ points in $\mathbb{R}^2$ and an 
input line $\gamma$ in $\mathbb{R}^2$,
we present an algorithm that runs in 
optimal $\Theta(n\log n)$ time and 
$\Theta(n)$ space
to solve a restricted version of the $1$-Steiner tree problem.
Our algorithm returns a minimum-weight tree interconnecting $P$
using at most one Steiner point $s \in \gamma$,
where edges are weighted by the Euclidean distance between their
endpoints.
We then extend the result to $j$ input lines.
Following this, we show how the algorithm of 
Brazil \etal\ \cite{brazil2015generalised} that solves
the $k$-Steiner tree problem in $\mathbb{R}^2$
in $O(n^{2k})$ time can be 
adapted to our setting.
For $k>1$, restricting the (at most) $k$ Steiner points to 
lie on an input line, the runtime becomes $O(n^{k})$.
Next we show how the results of 
Brazil \etal\ \cite{brazil2015generalised}
allow us to maintain the same time and space bounds
while extending to some non-Euclidean norms and 
different tree cost functions.
Lastly, we extend the result to $j$ input \emph{curves}.
\end{abstract}

\section{Introduction}
\label{sec: intro}
Finding the shortest interconnecting network for a given set 
of points is an interesting optimization problem 
with various applications for anyone seeking connectivity while
at the same time
concerned with conserving resources.
Sometimes we are able to introduce new points into the point set
such that the sum of the lengths of the edges in the
interconnecting network is reduced.
These extra points are called \emph{\textbf{Steiner points}}.
However, selecting optimal positions for 
these Steiner points and how many to place is 
NP-hard \cite{DBLP:journals/jco/BrazilTW00,brazilSteinerSurveyBook,garey1977complexity,DBLP:journals/siamdm/RubinsteinTW97}, 
and so a natural
question is: \emph{What is the shortest spanning 
network that can be constructed 
by adding only $k$ Steiner points to the given set of points?}
This is the \emph{\textbf{``$k$-Steiner tree" problem}}.

Consider a set $P$ of $n$ points in $\mathbb{R}^2$, 
which are also
called \emph{\textbf{terminals}} in the Steiner tree literature.
The \textbf{Minimum Spanning Tree} (\textbf{MST}) 
problem is to find the minimum-weight tree interconnecting $P$,
where edges are weighted by the Euclidean distance between their
endpoints.
Let $\operatorname{MST}(P)$ be a Euclidean
minimum spanning tree on $P$ and let $|\operatorname{MST}(P)|$ 
be the sum of its edge-weights (also called the \emph{\textbf{length}} of the tree).
Imagine we are given another set $S$ of points in $\mathbb{R}^2$.
$S$ is the set of Steiner points that we may use 
as intermediate nodes in 
addition to the points of $P$ to compute a minimum-weight 
interconnection of $P$.
An MST on the union of the terminals $P$ 
with some subset of Steiner points $S' \subseteq S$, 
i.e., $\operatorname{MST}(P \cup S')$,
is a \emph{\textbf{Steiner tree}}.
In the Euclidean \textbf{Minimum Steiner Tree} (\textbf{MStT}) problem, 
the goal is to find a subset $S' \subseteq S$ 
such that $|\operatorname{MST}(P \cup S')|$ is at most  
$|\operatorname{MST}(P \cup X)|$ for any $X \subseteq S$.
Such a minimum-weight tree is an MStT.
For our \emph{restricted} $k$-Steiner tree problem, we are given an input 
line $\gamma$ in $\mathbb{R}^2$;
the line $\gamma = S$ and the cardinality of $S'$ is at most $k$.

As $3$-D printing enters the mainstream, material-saving and time-saving 
printing algorithms are becoming more relevant.
Drawing on the study of MStTs,
Vanek et al.\ \cite{vanek2014clever} 
presented a geometric heuristic to create support-trees 
for $3$-D-printed objects where the forking points in these trees are 
solutions to a constrained Steiner tree problem.
Inspired by the work of Vanek \etal~as well as solutions
for the $1$-Steiner and $k$-Steiner tree 
problems in the $2$-D Euclidean plane
\cite{brazil2015generalised,brazilSteinerSurveyBook,oneStTreeProb},
we present an efficient algorithm to compute a solution 
for the $1$-Steiner tree problem where the placement
of the Steiner point is constrained to lie on an input line.
We present another motivating example.
Imagine we have a set $V$ of wireless nodes that 
must communicate by radio transmission. 
To transmit a longer distance to reach more distant nodes requires 
transmitting at a higher power. 
The MST of $V$ can be used to model a connected network 
that spans the nodes of $V$ while minimizing total power consumption. 
Suppose that an additional wireless node is available to be added to $V$, 
but that the new node's position is restricted to lie on a road $\gamma$ on 
which it will be delivered on a vehicle. 
Where on $\gamma$ should the additional node be positioned to minimize the 
total transmission power of the new network? 

We refer to our problem as a \emph{\textbf{$1$-Steiner tree problem restricted to
a line}}.
For our purposes, let an \textbf{\emph{optimal Steiner point}} be a point 
$s \in S$ such that 
$|\operatorname{MST}(P \cup \{s\})| \leq |\operatorname{MST}(P \cup \{u\})|$ 
for all $u \in S$.

\begin{description}
	\item[\texorpdfstring{$1$}{1}-Steiner Tree Problem 
	Restricted to a Line] 
	~\\	 
    Given a set $P$ of $n$ points in $\mathbb{R}^2$ 
	and a line $\gamma$ in $\mathbb{R}^2$, 
	select a point $s \in \gamma \cup \emptyset$
	that minimizes $|\operatorname{MST}(P \cup \{s\})|$.
\end{description}

A restricted version of our problem has been studied 
for the case when the input point set $P$ lies to one side of the given
input line and a point from the line \emph{must} be chosen.
Chen and Zhang gave an $O(n^2)$-time algorithm to solve this
problem \cite{CHEN2000867}.
Similar problems have also been studied by 
Li \etal~\cite{DBLP:journals/jco/LiLLWZ20} building on the research of
Holby~\cite{holby2017variations}. 
The two settings they study are: (a) 
the points of $P$ lie anywhere in $\mathbb{R}^2$ and \emph{must}
connect to the input line using any number of Steiner 
points, and any part of the
input line used in a spanning tree does not count towards its length;
and (b) the same problem, 
but the optimal line to minimize the network length 
is not given and must be computed.
Li \etal~provide 1.214-approximation\footnote{This means the length of their tree is at most 1.214 times the length of the optimal solution. Here they take advantage of the result of Chung and Graham \cite{doi:10.1111/j.1749-6632.1985.tb14564.x} showing that the MST is a 1.214-approximation (to three decimals) of the MStT.} 
algorithms for both (a) and (b)
in $O(n \log n)$ and $O(n^3 \log n)$ time respectively.
Our problem differs from the problems
of Chen and Zhang \cite{CHEN2000867}, 
Li \etal~\cite{DBLP:journals/jco/LiLLWZ20}, 
and Holby~\cite{holby2017variations} in the following ways:
(a) we have no restriction on the 
placement of the points of $P$ with 
respect to the input line $\gamma$;
(b) our problem does not require connecting to $\gamma$;
and (c) travel in our network has the same 
cost \emph{\textbf{on}} $\gamma$ as \emph{\textbf{off}} of it.
For example, if the points of the set $P$ were 
close to the line but far
from each other, then the solution of
Li \etal~\cite{DBLP:journals/jco/LiLLWZ20} would connect the points
to the line and get a tree with much lower weight/length than even the MStT.
Such an example is shown in \cref{fig: mst length vs sum distances line}: 
the MStT of points $\{a,b,c,d\}$ is the same as its MST since all triples
form angles larger than $\frac{2\pi}{3}$ 
\cite{brazilSteinerSurveyBook,gilbert1968steiner}. 
In our setting, the MST is the best solution for this point set, whereas
in the setting of Holby~\cite{holby2017variations} 
and Li \etal~\cite{DBLP:journals/jco/LiLLWZ20}, the best solution connects 
each input point directly to $\gamma$ to form a spanning tree between the 
points using pieces of $\gamma$.
The length of the MST is significantly larger than the length of the other 
solution since in their setting, only the edges connecting the points to 
$\gamma$ contribute to the length of the spanning tree.

\begin{figure}
\includegraphics[page=1,scale=0.24]{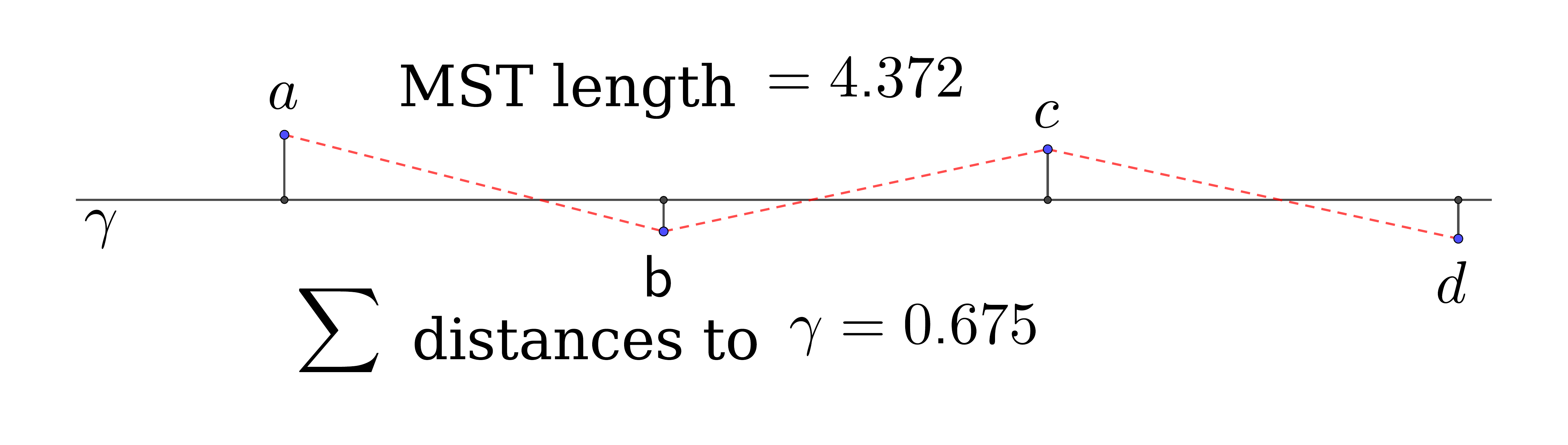}
\centering
\caption{Here we have $\gamma$ as the $x$-axis with
points $a=(0.489,0.237)$, 
$b=(1.865,-0.114)$, $c=(3.26,0.184)$, and $d=(4.75,-0.141)$.
Depicted are the MST of $\{a,b,c,d\}$ in red dashed line segments and its 
length; the input line $\gamma$; a spanning tree of $\{a,b,c,d\}$ 
connecting each point to $\gamma$ and the length of this spanning tree
for the setting of Holby~\cite{holby2017variations} 
and Li \etal~\cite{DBLP:journals/jco/LiLLWZ20}.
The numbers have been rounded to three decimal places.}
\label{fig: mst length vs sum distances line}
\end{figure}

In our algorithm we use a type of Voronoi diagram whose regions are
bounded by rays and segments.
We make a general position assumption that $\gamma$ is not collinear
with any ray or segment in the Voronoi diagrams.
In other words, the intersection of the rays and segments of 
these Voronoi diagrams with $\gamma$ is either empty or a single point.
We also assume that the edges of $\operatorname{MST}(P)$
have distinct weights.
In this paper we show the following theorem.

\begin{restatable}{theorem}{MainResultTheorem}
\label{theorem: main result}
Given a set $P$ of $n$ points in the Euclidean plane and a line $\gamma$,
\cref{alg: algo} computes in optimal $\Theta(n\log n)$ time and 
optimal $\Theta(n)$ space 
a minimum-weight tree connecting all points in $P$ using at most
one extra point $s \in \gamma$.
\end{restatable}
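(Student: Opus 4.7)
The plan is to establish the theorem in three stages: (i) identify a Voronoi-like decomposition of the line $\gamma$ that reduces the continuous optimization to $O(n)$ discrete subproblems, (ii) solve each subproblem in $O(1)$ time after $O(n\log n)$ preprocessing, and (iii) match this with $\Omega(n\log n)$-time and $\Omega(n)$-space lower bounds.

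First I would set $f(s) := |\operatorname{MST}(P \cup \{s\})|$ and observe that if the optimal $s^* \in \gamma$ strictly improves upon $|\operatorname{MST}(P)|$, then $s^*$ must have degree at least three in $\operatorname{MST}(P \cup \{s^*\})$; otherwise $s^*$ is either redundant or fails to decrease total weight. Next, I would use the Voronoi-type diagram referenced in the paper to partition $\gamma$ into $O(n)$ intervals; the general position assumption guarantees that each ray or segment of the diagram meets $\gamma$ in at most one point, and standard planar Voronoi complexity pins the total number of intervals at $O(n)$. Within each such interval, the set of neighbors of $s$ in $\operatorname{MST}(P \cup \{s\})$ is constant and of size $O(1)$ (by the standard degree-$6$ bound for Euclidean MSTs), and the set of retained edges of $\operatorname{MST}(P)$ is fixed as well. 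Hence $f$ restricted to the interval is a convex sum of $O(1)$ Euclidean distances from $s$ to fixed points plus a constant, and its minimum over the interval can be computed in $O(1)$ time, occurring at a critical point or at an interval endpoint. The final answer is the best over all intervals, compared against $|\operatorname{MST}(P)|$ for the ``no Steiner point'' option.

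For the complexity analysis, I would bound each ingredient: computing $\operatorname{MST}(P)$ takes $O(n\log n)$ time and $O(n)$ space; the Voronoi-type diagram and its $O(n)$ intersections with $\gamma$ fit within the same bounds; and the sweep along $\gamma$ can maintain $f(s)$ incrementally, since at each boundary only a single combinatorial swap occurs, giving amortized $O(1)$ per interval and $O(n)$ total. For the matching $\Omega(n\log n)$-time lower bound I would reduce sorting $n$ reals by encoding the reals as points of $P$ arranged so that the optimal solution structure determines their order, in the algebraic decision-tree model; the $\Omega(n)$ space bound is immediate from the input size.

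The main obstacle, and the technical core of the argument, is proving that between consecutive Voronoi boundaries along $\gamma$ the neighbor set of $s$ in $\operatorname{MST}(P\cup\{s\})$ is indeed fixed, and that at each boundary crossing exactly one edge swap occurs. This requires carefully tying the regions of the Voronoi-type diagram, which encode local nearest-neighbor information along $\gamma$, to the full MST structure of $P\cup\{s\}$, presumably via cut- and cycle-property arguments applied to the evolving tree and using the assumption that MST edge weights are distinct. Once this structural claim is in place, the convexity of $f$ within a cell, the amortized $O(1)$ per-interval update, and the overall linear sweep all follow with routine work.
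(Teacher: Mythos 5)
Your high-level plan (decompose $\gamma$ into $O(n)$ intervals via a Voronoi-type diagram, spend $O(1)$ per interval after $O(n\log n)$ preprocessing, match with lower bounds) is the paper's plan, but the step you yourself flag as ``the technical core'' is where the proposal breaks. You claim that within an interval \emph{the neighbour set of $s$ in $\operatorname{MST}(P\cup\{s\})$} is fixed, that the retained edges of $\operatorname{MST}(P)$ are fixed, and that at each boundary crossing ``exactly one edge swap occurs,'' so that $f(s)$ can be maintained incrementally in amortized $O(1)$. None of this is quite right, and none of it is how the paper proceeds. What is constant on an interval is only the \emph{candidate} neighbour in each of the six $\pi/3$-cones (one per oriented Voronoi diagram); which subset of those candidates actually becomes the neighbour set of $s$, and hence which edges of $\operatorname{MST}(P)$ are displaced, can change inside the interval. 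Consequently $f$ restricted to an interval is a lower envelope of several ``constant plus sum of distances'' functions, not a single convex function. The paper resolves this by brute force: for each interval it enumerates every subset of the $\le 6$ candidates of size $3$ or $4$ (degree $\le 4$ by Rubinstein et al., degree $\ge 3$ by the triangle inequality), optimizes $\sigma_u=d_{\mathbb{P}}(x)$ for each fixed subset, and evaluates each resulting candidate $u$ \emph{from scratch} via $|\operatorname{MST}(P\cup\{u\})| = |\operatorname{MST}(P)| + \sigma_u - \Delta_u$. Here $\Delta_u$ is the sum, over the $\le\binom{4}{2}$ cycles of $\operatorname{MST}(P)\cup\operatorname{MST}(P\cup\{u\})$, of the longest $\operatorname{MST}(P)$-edge on each cycle, computable in $O(1)$ by longest-edge/LCA queries on a preprocessed auxiliary tree (\cref{lemma: mstUpdate}); and \cref{lemma: deg 3 connected,lemma: deg 4 connected} are needed to certify that deleting those longest edges leaves a tree. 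Your single-swap/incremental-sweep claim is both unproven and unnecessary, and without the enumeration over subtopologies plus the $\Delta_u$ machinery the $O(1)$-per-interval step does not go through.

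Two smaller points. A single ``standard planar Voronoi'' diagram does not suffice: you need the six \emph{oriented} Voronoi diagrams (one per cone) precisely so that each interval carries the nearest terminal in every cone direction, which is what bounds the candidate topologies by a constant. For the lower bound, your sorting reduction is plausible (collinear input points force the output tree to be the sorted path), but the paper instead reduces from the closest-pair problem, which has a clean $\Omega(n\log n)$ bound in the algebraic decision-tree model; if you keep the sorting route you should specify the model in which sorting-as-a-computation-problem has that lower bound and how the order is extracted from the returned tree in $O(n)$ time.
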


\cref{sec: steiner background results} reviews the tools 
and properties we will need for our algorithm.
\cref{sec: algorithm} presents \cref{alg: algo} and the proof
of \cref{theorem: main result}.
\cref{sec: generalization} 
shows how to adapt the existing $k$-Steiner tree algorithm of
Brazil \etal\ \cite{brazil2015generalised} to solve the problem
when the set of at most $k$ Steiner points is restricted to lie on $j$ input lines,
as well as
how the results of Brazil \etal\ \cite{brazil2015generalised}
allow us to generalize our algorithms to
norms other than Euclidean and
cost functions for the tree other than
minimizing the sum of edge-weights,
giving the following corollary.
The definitions of \emph{norm},
Restrictions \ref{rest: our four}, \ref{rest:one}, \ref{rest:two}, 
and \ref{rest:three},
and \emph{symmetric $\ell_1$-optimizable cost functions}
are found in \cref{sec: generalization}.
Informally, Restriction \ref{rest: our four} is that we can quickly 
solve a $k$-Steiner tree problem with a prespecified topology;
and 
Restrictions \ref{rest:one}, \ref{rest:two}, 
and \ref{rest:three} impose properties on the given norm, 
like the quick computations of intersections and bisectors, 
that make it behave somewhat like the familiar  Euclidean norm.

\begin{restatable*}{corollary}{generalCorollary}
\label{corollary: other costs}
Given:
\begin{itemize}
    \item a set $P$ of $n$ points in $\mathbb{R}^2$;
    \item a norm $\Vert \cdot \Vert$
that is compliant to Restrictions \ref{rest:one}, \ref{rest:two}, 
and \ref{rest:three};
    \item a symmetric $\ell_1$-optimizable cost function $\alpha$;
    \item $j$ lines $\Gamma = \{\gamma_1, \ldots \gamma_j\}$
\end{itemize}
By running \cref{alg: algo} for each $\gamma \in \Gamma$,
in $O(jn\log n)$ time and $O(n + j)$ space  
a minimum-weight tree with respect to $\alpha$ and 
$\Vert \cdot \Vert$ is computed that connects all points in $P$ 
using at most one extra point $s \in \bigcup_{i=1}^{j} \gamma_i$.

By running the algorithm of \cref{sec: k st points} for a constant integer $k>1$
under Restriction \ref{rest: our four},
the restricted $k$-Steiner tree problem is solved in 
$O((jn)^k)$ time and $O(jn)$ space with a Steiner 
set $S$ of at most $k$ points from  $\bigcup_{i=1}^{j} \gamma_i$.
\end{restatable*}

Lastly, in \cref{sec: from lines to curves} we show how to adapt 
the running time and space
of the algorithms when given $j$ input \emph{curves}
of a restricted class,
giving the following corollary.
The variables $\mu, t, g, c,$ and $q$
are defined in \cref{sec: from lines to curves}
and correspond to the complexities of certain 
primitive operations and the complexity of the
zone of the curves in an arrangement of lines.

\begin{restatable*}{corollary}{finalCorollary}
\label{corollary: k steiner points j curves}
Given:
\begin{itemize}
    \item a set $P$ of $n$ points in $\mathbb{R}^2$;
    \item a norm $\Vert \cdot \Vert$
that is compliant to Restrictions \ref{rest:one}, \ref{rest:two}, 
and \ref{rest:three};
    \item a symmetric $\ell_1$-optimizable cost function $\alpha$;
    \item $j$ input curves $\Gamma=\{\gamma_1, \ldots \gamma_j\}$
with maximum space complexity $\gamma_{sp}$
\end{itemize} 
By running \cref{alg: algo} for each $\gamma \in \Gamma$,
in $O(j(n\log n + \mu + t(g + c)))$ time 
and $O(n + j\gamma_{sp} + \mu_{sp} + t_{sp})$ 
space a minimum-weight tree with respect to $\alpha$ and $\Vert \cdot \Vert$
is computed
that connects all points in $P$ 
using at most one extra point $s \in \bigcup_{i=1}^{j} \gamma_i$. 

For a constant integer $k>1$, the algorithm of \cref{sec: k st points} solves
the restricted $k$-Steiner tree problem in
$O((j(g+c))^{k}q + j\mu + n\log n)$ time 
and $O(n + j\gamma_{sp} + j\mu_{sp} + q_{sp})$ space with a Steiner 
set $S$ of at most $k$ points from  $\bigcup_{i=1}^{j} \gamma_i$.
\end{restatable*}

In this paper we assume 
we can compute to any fixed precision in constant time and space 
the derivative and roots 
of a function that can be written using a constant number of
the following operators on one variable and the real numbers:
$+,-,*,/,$ and $\sqrt[c]{\cdot}$ for a constant $c$.

\section{Relevant Results}
\label{sec: steiner background results}

There has been a lot of research on Steiner trees in various dimensions,
metrics, norms, and under various constraints. 
See the surveys by Brazil \etal~\cite{brazil2014history} and 
Brazil and Zachariasen~\cite{brazilSteinerSurveyBook} for a good 
introduction.
In the general Euclidean case it has been shown that Steiner points
that reduce the length of the MST 
have degree three or four \cite{degree5stpts}.
There are results for building
Steiner trees when the \emph{terminal set} is restricted to 
\textit{zig-zags} \cite{booth1992steiner,du1983steiner},
\textit{curves} \cite{DBLP:journals/siamdm/RubinsteinTW97}, 
\textit{ladders} \cite{chung1978steiner},
and \textit{checkerboards} 
\cite{brazil1996minimal,brazil1997full,brazil1997minimal};
for when the angles between edges are constrained 
\cite{DBLP:journals/jco/BrazilTW00,brazilSteinerSurveyBook};
for obstacle-avoiding Steiner trees 
\cite{winter1993euclidean,winter1995euclidean,winterDIMACS,DBLP:journals/dam/WinterZN02,DBLP:conf/alenex/ZachariasenW99} (which include geodesic versions where the terminals, Steiner points, and tree are 
contained in polygons);
and for $k$-Steiner trees with $k$ as a fixed constant
where you can use at most $k$ Steiner points 
(for terminals and Steiner points in various normed planes including the  
$2$-D Euclidean plane, there is an $O(n^{2k})$-time algorithm)
\cite{brazil2015generalised,brazilSteinerSurveyBook,oneStTreeProb}.

\subsection{Tools}
\label{sec: tools}
Without loss of generality, we consider the positive $x$-axis to be the
basis for measuring angles, so that $0$ radians is the positive $x$-axis, 
$\frac{\pi}{3}$ radians is a counterclockwise rotation of 
the positive $x$-axis about the origin by $\frac{\pi}{3}$ radians,
etc.

\begin{observation}
\label{obs: msts}
Given a point set $V \subset \mathbb{R}^2$, 
if we build $\operatorname{MST}(V)$, each point $v \in V$ will have at
most six neighbours in the MST.
This is because, due to the sine law, 
for any two neighbours $w$ and $z$ of $v$ in $\operatorname{MST}(V)$
the angle $\angle wvz$ must be at least $\frac{\pi}{3}$ radians.
These potential neighbours can be found by dividing the plane up into
six interior-disjoint cones of angle $\frac{\pi}{3}$ all apexed on $v$.
The closest point of $V$ to $v$ in each cone is the potential neighbour 
of $v$ in the MST in that cone.
\end{observation}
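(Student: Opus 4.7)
My plan is to prove the three assertions of the observation as a single chain: first the angular separation claim, which immediately yields the degree bound via a pigeonhole on the total angle $2\pi$ around $v$, and then the cone characterization by a standard MST exchange argument.

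For the angular claim, suppose $w$ and $z$ are both neighbours of $v$ in $\operatorname{MST}(V)$ with $\angle wvz < \pi/3$. In the triangle $vwz$ the three interior angles sum to $\pi$, so one of $\angle vwz$, $\angle vzw$ must exceed $\pi/3 > \angle wvz$; by the law of sines the side $wz$ opposite $\angle wvz$ is strictly shorter than the side opposite that larger angle, so without loss of generality $|wz| < |vw|$. Deleting the MST edge $vw$ splits $\operatorname{MST}(V)$ into two components, and since the edge $vz$ is still present, $z$ lies in the same component as $v$. Reconnecting the two components via $wz$ yields a spanning tree of strictly smaller total weight, contradicting the optimality of $\operatorname{MST}(V)$. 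The degree bound follows since at most $\lfloor 2\pi/(\pi/3) \rfloor = 6$ angles of size at least $\pi/3$ fit around $v$.

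For the cone characterization, fix any partition of the plane into six closed cones $C_1,\ldots,C_6$ of angular width $\pi/3$ apexed at $v$, using the general-position assumption so that no point of $V\setminus\{v\}$ lies on a cone boundary. Let $w$ be the MST neighbour of $v$ in some cone $C_i$ and suppose for contradiction there exists $z \in V \cap C_i$ with $|vz| < |vw|$. Since $w,z \in C_i$ we have $\angle wvz \leq \pi/3$, so combining the law of cosines $|wz|^2 = |vw|^2 + |vz|^2 - 2|vw||vz|\cos(\angle wvz)$ with $\cos(\angle wvz) \geq 1/2$ gives $|wz|^2 \leq |vw|^2 + |vz|(|vz|-|vw|) < |vw|^2$. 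I then split into two cases based on where $z$ sits after deleting $vw$ from the MST. If $z$ is in the component of $v$, I swap $vw$ for $wz$ exactly as above. If $z$ is in the component of $w$, the MST path from $v$ to $z$ starts with the edge $vw$, so adding the non-tree edge $vz$ forms a cycle containing $vw$; the cycle property then forces $|vw| \leq |vz|$, contradicting $|vz| < |vw|$.

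The main obstacle is the second case of the cone step: when $z$ already hangs off $w$ in the MST the cut/swap argument does not directly apply, and one must pivot to the cycle property instead. Minor care is also needed for boundary orientation of the six cones, which is absorbed by the distinct-edge-weight general-position assumption already in force throughout the paper.
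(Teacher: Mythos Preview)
Your proof is correct and follows the same line the paper gestures at: the paper states this as an \emph{observation} with only the inline justification ``due to the sine law'' and does not give a separate proof, so your argument simply fills in the standard details (law-of-sines/exchange for the angular bound, law-of-cosines plus cut/cycle swap for the cone claim) that the paper leaves implicit. One minor remark: in your Case~2 of the cone step you can bypass the cycle property entirely---after deleting $vw$, since $z$ lies in $w$'s component, the edge $vz$ itself crosses the cut and $|vz|<|vw|$ already yields the contradiction---but your cycle-property formulation is equally valid.
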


Without loss of generality, suppose the input line $\gamma$
passes through the origin of the Euclidean plane with slope $0$.
This line can be parametrized by $x$-coordinates.
Let an \emph{\textbf{interval}} on $\gamma$ be the set of points on $\gamma$
in between and including two fixed $x$-coordinates, 
called the endpoints of the interval.
Our approach will be to divide the input line into $O(n)$ intervals
using a special kind of Voronoi diagram outlined below.
The intervals have the property that for any given interval $I$, 
if we compute $\operatorname{MST}(P \cup \{s\})$ for any $s \in I$,
the subset of possible neighbours of $s$ in the MST is 
constant.
For example, \cref{fig: interval neighbours} shows a set $V$ of input points
with the blue points labelled $p_i$ for $1\leq i \leq 6$, 
the input line $\gamma$, and a green interval $I$.
The plane is divided into six cones of $\frac{\pi}{3}$ radians, all apexed
on the red point $x \in I$.
In $\operatorname{MST}(V \cup \{x\})$,
if $x$ connects to a point in cone $i$, 
it connects to $p_i$.
The green interval $I$ has the property that this is true anywhere 
we slide $x$ and its cones in $I$.

\begin{figure}
\includegraphics[page=1,scale=0.16]{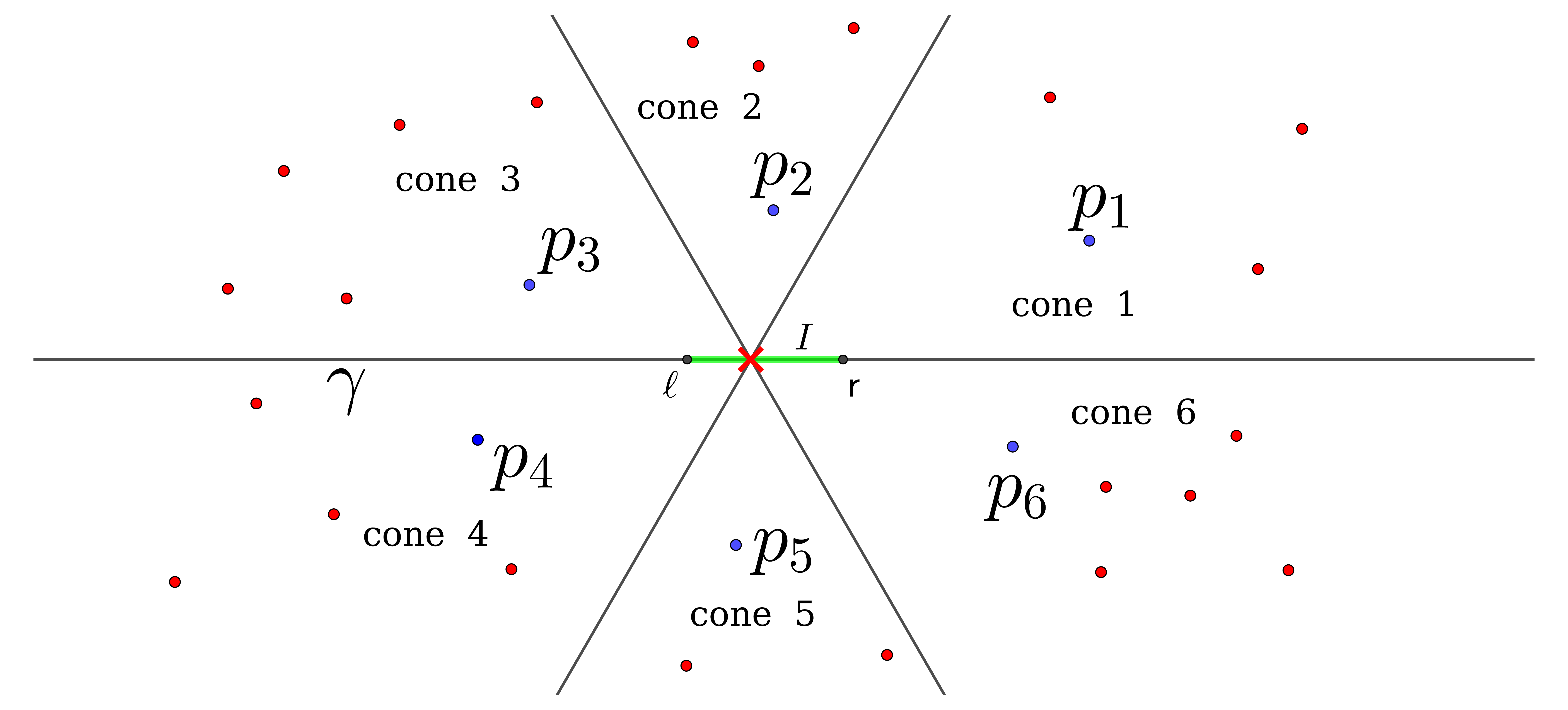}
\centering
\caption{Every point along the green interval $I$ of $\gamma$ 
(i.e., between the \emph{$\ell$} endpoint and the \emph{$r$} endpoint) 
has the same potential MST neighbour (the blue points) in the same cone.}
\label{fig: interval neighbours}
\end{figure}

\subsubsection{Oriented Voronoi Diagrams}
\label{subsec: OVDs}

The $1$-Steiner tree algorithm of Georgakopoulos and Papadimitriou
(we refer to this algorithm as \emph{GPA})
\cite{oneStTreeProb} works by 
subdividing the plane into regions defined by the cells of 
the \emph{Overlaid Oriented Voronoi Diagram} (overlaid \textbf{OVD}).\footnote{Georgakopoulos and Papadimitriou \cite{oneStTreeProb} refer to the overlaid OVD as \emph{Overlaid Oriented Dirichlet Cells}.} 
They show that the complexity of this diagram is $\Theta(n^2)$.
Refer to the cone $\mathbb{K}$ defining an OVD as an \emph{\textbf{OVD-cone}}.
Let $\mathbb{K}_{v}$ be a copy of the OVD-cone whose apex coincides
with point $v \in \mathbb{R}^2$.
OVDs are a type of Voronoi diagram made up of 
\emph{oriented Voronoi regions} 
(\textbf{OVRs}) where the OVR of a site $p \in P$ 
is the set of points $w \in \mathbb{R}^2$
for which $p$ is the closest site in $\mathbb{K}_w \cap P$
\cite{oneStTreeProb}.
If $\mathbb{K}_w \cap P = \varnothing$
we say $w$ belongs to an OVR whose site is the empty set.
These notions are illustrated in \cref{fig: OVD example}.

\begin{figure}
\includegraphics[page=1,scale=0.3]{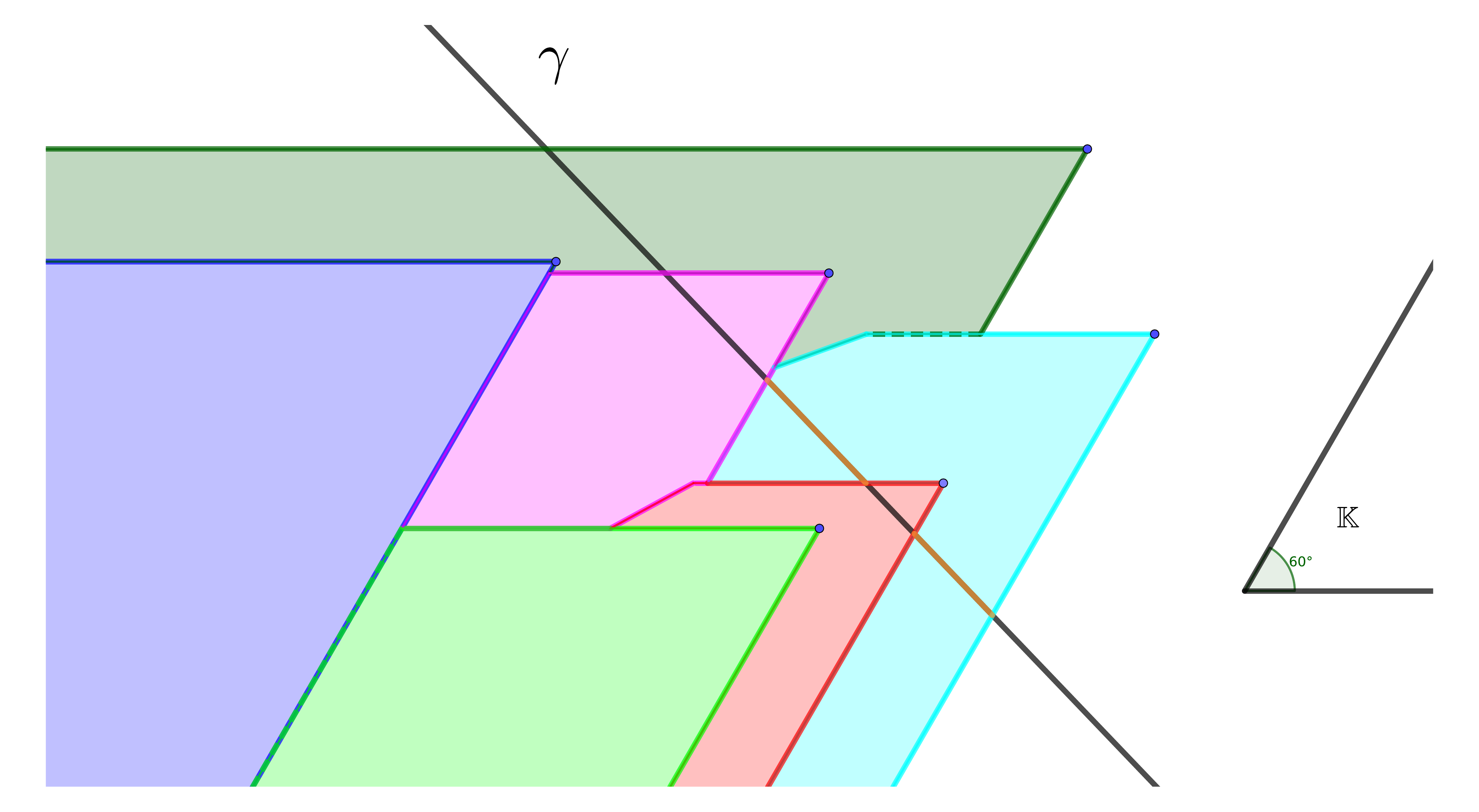}
\centering
\caption{An example of an OVD for six points defined by the OVD-cone 
$\mathbb{K}$ with bounding rays oriented towards $0$ and $\frac{\pi}{3}$.
The six sites (i.e., the points) are the blue top-right points 
of the coloured OVRs.
When intersected with $\gamma$, the OVD creates intervals along $\gamma$.
Each interval corresponds to exactly one OVR, but an OVR may create multiple 
intervals (for example, the light-blue OVR creates the two orange intervals).
The site corresponding to an interval outside of a coloured OVR is a 
special site represented by the empty set.}
\label{fig: OVD example}
\end{figure}

Chang \etal~\cite{DBLP:journals/ipl/ChangHT90}
show that the OVD for a given OVD-cone of angle $\frac{\pi}{3}$
(e.g., the OVD in \cref{fig: OVD example})
can be built in $O(n\log n)$ time using $O(n)$ space.
The OVD is comprised of segments and rays that are subsets of bisectors
and cone boundaries which bound the OVRs.
The size of the OVD is $O(n)$.

Since by 
\cref{obs: msts}
a vertex of the MST has a maximum degree of six, by overlaying
the six OVDs for the six cones of angle $\frac{\pi}{3}$
that subdivide the Euclidean plane (i.e., each of the six cones defines an 
orientation for a different OVD)
the GPA partitions the plane into $O(n^2)$ regions. 
Each of these regions has the property that if we place 
a Steiner point $s$ in the region, the points of $P$ associated with this 
region (up to six possible points) 
are the only possible neighbours 
of $s$ in the MStT (similar to the example in 
\cref{fig: interval neighbours}).
The GPA then iterates over each of these regions.
In region $R$, the GPA considers each subset of possible neighbours 
associated with $R$.
For each such subset it then computes the optimal location for a Steiner 
point whose neighbours are the elements of the subset, and then computes
the length of the MStT using that Steiner point, keeping track
of the best solution seen.
The generalized algorithm for placing $k$ Steiner points  
\cite{brazil2015generalised,brazilSteinerSurveyBook} 
essentially does the same thing $k$ times 
(by checking the topologies of the MStT
for all possible placements of $k$ points), 
but is more complicated (checking the effects that multiple Steiner points
have on the MStT is more complex).

\subsubsection{Updating Minimum Spanning Trees}
\label{sec: updating mst}

In order to avoid actually 
computing each of 
the candidate MSTs on the set of $P$ with the addition of our 
candidate Steiner points,
we instead compute the \emph{differences in length} 
between $\operatorname{MST}(P)$ and
the candidate MStTs.
Georgakopoulos and Papadimitriou \cite{oneStTreeProb}
similarly avoid repeated MST computations
by performing $O(n^2)$ preprocessing 
to allow them to answer queries of the following type in constant time: 
given that the edges $ab_1, ab_2, \ldots, ab_{j}$ are decreased by
$\delta_1, \delta_2, \ldots, \delta_j$ for constant $j$, 
what is the new MST? 
They then use these queries to find the length of the MStT
for each candidate Steiner point.
Refer to \cite{oneStTreeProb} for details.
Brazil et al.\ also perform some preprocessing in time 
$O(n^2)$ for $k=1$ and $O(n^3)$ otherwise
\cite{brazil2015generalised}.
However, using an approach involving
an auxiliary tree and lowest common ancestor (\textbf{LCA}) queries,
we can compute what we need in $o(n^2)$ time.
We first compute $\operatorname{MST}(P)$, build an auxiliary tree in 
$O(n\log n)$ time, and process the auxiliary tree in $O(n)$ 
time \cite{DBLP:journals/siamcomp/HarelT84}
to support LCA queries in $O(1)$ time
\cite{DBLP:journals/comgeo/BoseMNSZ04,monma1992transitions}.
This preprocessing helps us determine the benefit a given
Steiner point provides by computing in $O(1)$ time
the edges of the MST that disappear when the Steiner point
is added.
Details are outlined in the next section.

\section{Algorithm}
\label{sec: algorithm}
In this section we present \cref{alg: algo} and prove 
\cref{theorem: main result}.
\cref{alg: algo} computes OVDs for the six cones of angle $\frac{\pi}{3}$
that divide up the Euclidean plane (i.e., each of the six cones defines an 
orientation for a different OVD).
Though they can be overlaid in $O(n^2)$ time, 
we do not need to overlay them.
As mentioned in \cref{subsec: OVDs}, each OVD has $O(n)$ size and is 
therefore comprised of $O(n)$ rays and segments.
As illustrated in \cref{fig: OVD example}, 
intersecting any given OVD with a line $\gamma$ carves $\gamma$ up into
$O(n)$ intervals since we have $O(n)$ rays and segments, each of which 
intersects a line at most once.\footnote{This follows from the zone theorem
\cite{DBLP:journals/bit/ChazelleGL85,DBLP:books/lib/BergCKO08,DBLP:journals/siamcomp/EdelsbrunnerOS86,DBLP:journals/siamcomp/EdelsbrunnerSS93}.}
Each interval corresponds to an intersection of $\gamma$
with exactly one OVR of the OVD since OVDs are planar, 
but multiple non-adjacent intervals may be defined by the same OVR,
as in \cref{fig: OVD example}.
Therefore each interval $I$ is a subset of an OVR, 
and for every pair of points 
$u_1, u_2 \in I$ the closest point in 
$\mathbb{K}_{u_1} \cap P$ is the same as
in $\mathbb{K}_{u_2} \cap P$, where 
$\mathbb{K}$ is the OVD-cone of the OVD being considered.

If we do this with all six OVDs, $\gamma$ is subdivided into $O(n)$
intervals.
As in \cref{fig: interval neighbours},
each interval $I$ has the property that for any 
point $u \in I$, if we were to build $\operatorname{MST}(P \cup \{u\})$,
the ordered set of six potential neighbours is a 
constant set of constant size.\footnote{In other 
words, each $u \in I$ has the same constant-sized set of 
fixed candidate sub-topologies incident to $u$ that could be the 
result of $\operatorname{MST}(P \cup \{u\})$.}
Each element of this ordered set is defined by a different 
OVD and corresponds to
the closest point in $\mathbb{K}_u \cap P$.
In each interval we solve an optimization problem to find 
the optimal placement for a Steiner point in that interval
(i.e., minimize the sum of distances of potential neighbours
to the Steiner point) 
which takes $O(1)$ time since each of these $O(1)$
subproblems has $O(1)$ size.

\begin{figure}
	\subfloat[$\operatorname{MST}(\{a,b,c,d,e\})$ \label{fig: union mst mstt a}]{
		\includegraphics[width=0.31\textwidth]{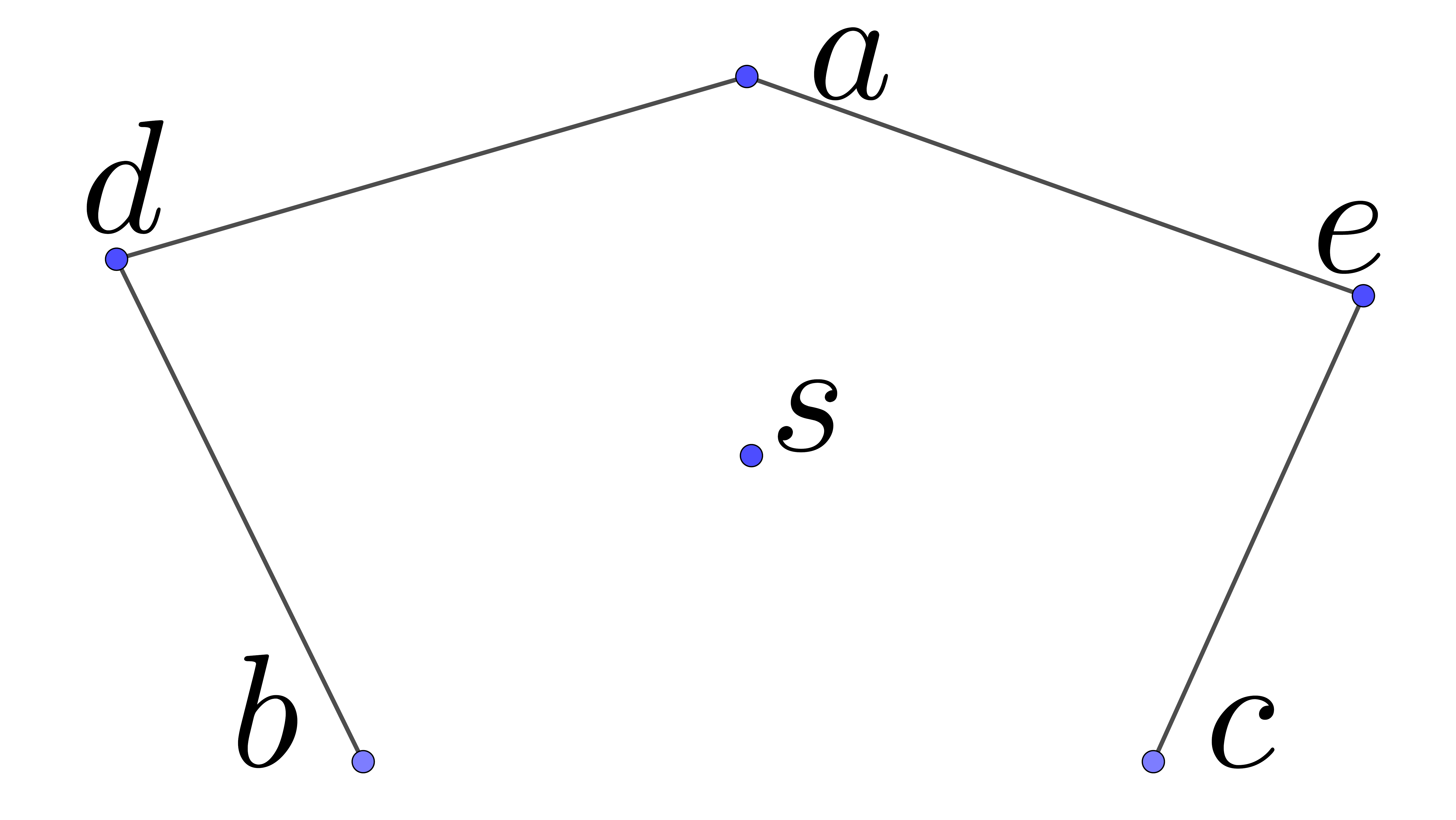}
		}
	\hfill
	\subfloat[$\operatorname{MST}(\{a,b,c,d,e,s\})$ \label{fig: union mst mstt b}]{
		\includegraphics[width=0.31\textwidth]{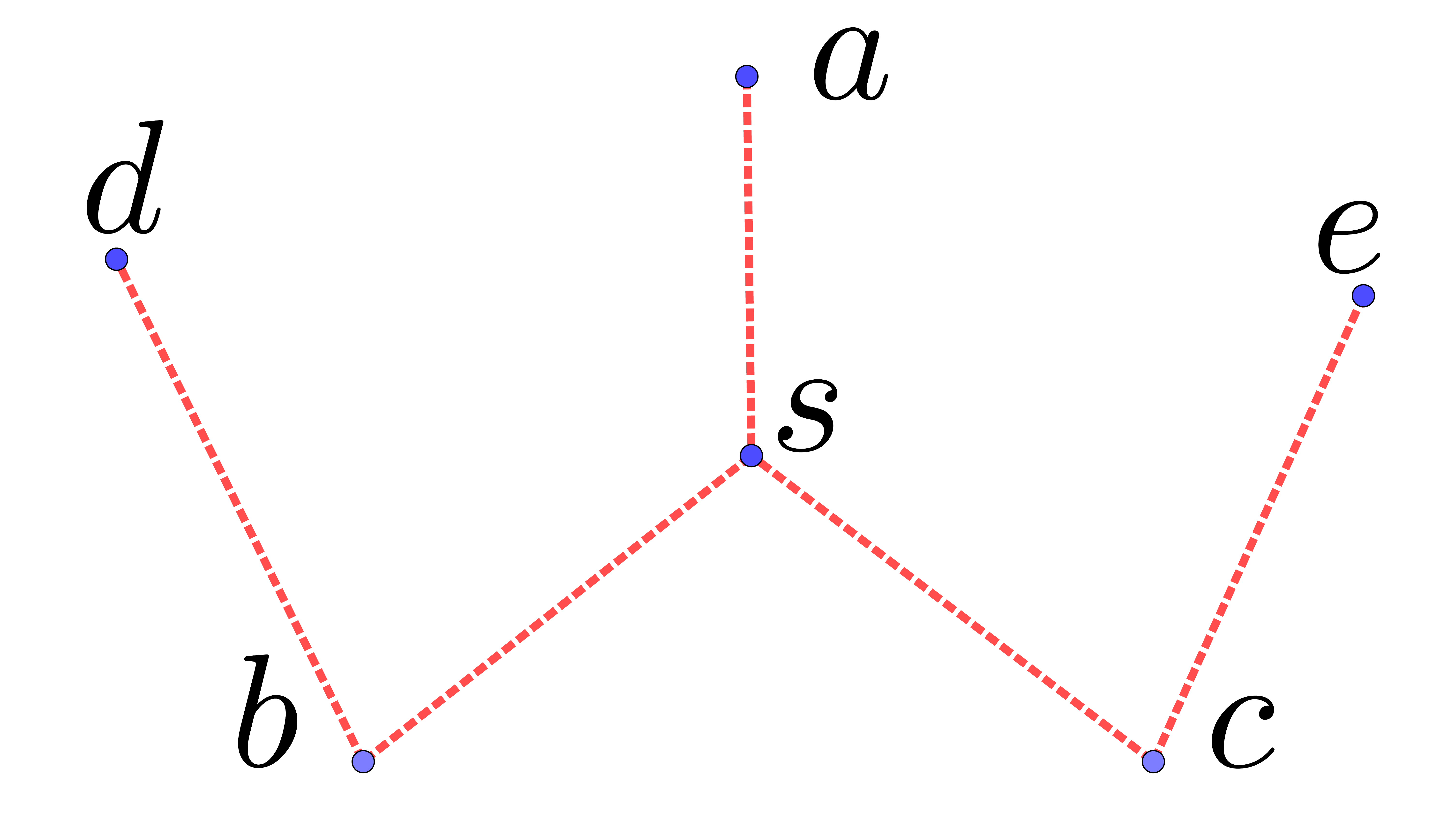}
		}
	\hfill
	\subfloat[Union of the trees from \cref{fig: union mst mstt a,fig: union mst mstt b} \label{fig: union mst mstt c}]{
		\includegraphics[width=0.31\textwidth]{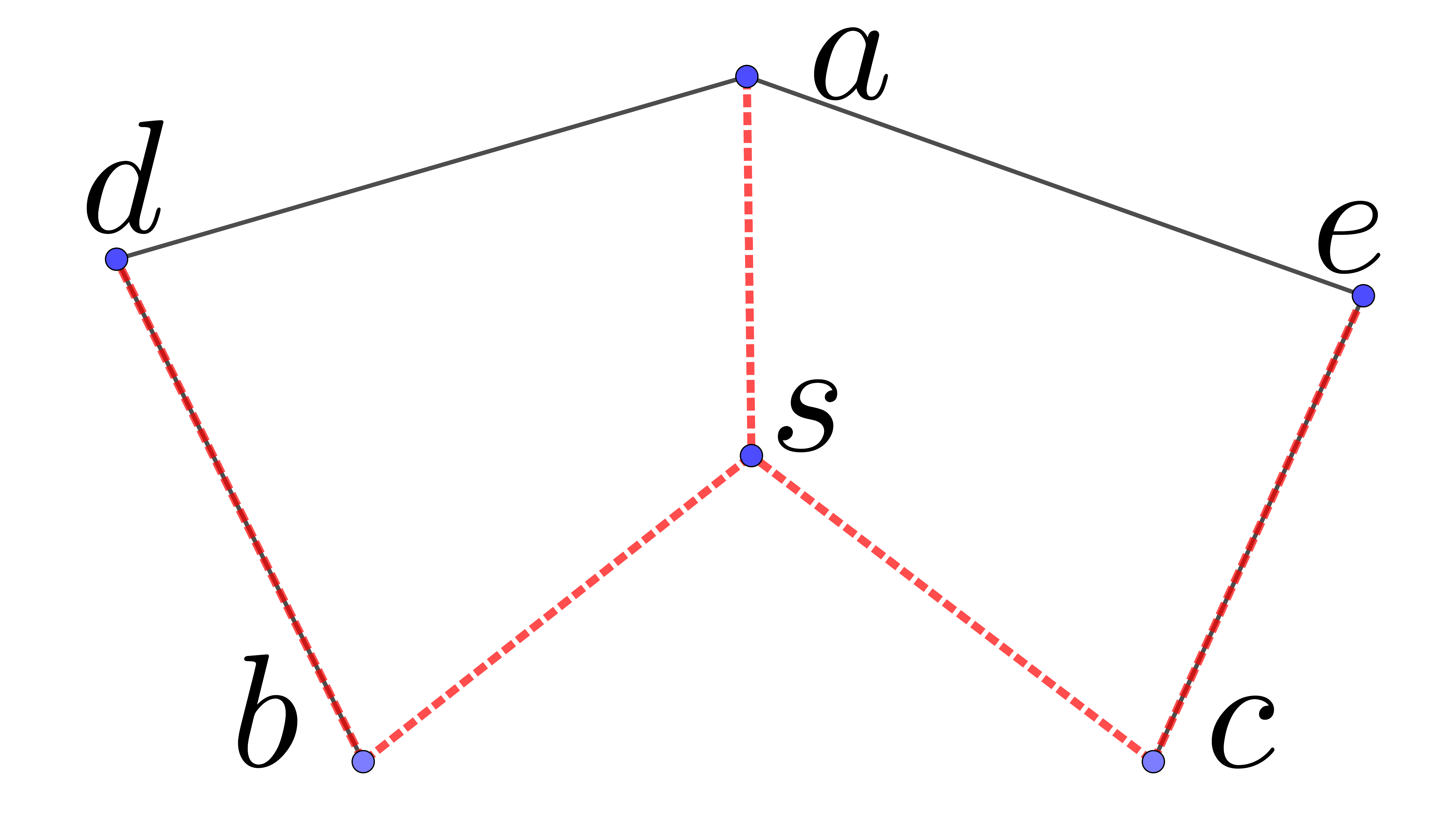}
		}
\caption{The union of the trees in 
\cref{fig: union mst mstt a,fig: union mst mstt b} gives the graph 
in \cref{fig: union mst mstt c}
with cycles $(s,b,d,a)$, $(s,a,e,c)$, and $(s,b,d,a,e,c)$ 
whose longest edges excluding $s$ are
$(d,a)$ and $(a,e)$.}
\label{fig: union mst mstt}
\end{figure}

Once we have computed an optimal placement for a Steiner point 
for each computed interval of our input line $\gamma$, 
we want to compute which one of these $O(n)$ candidates produces the MStT,
i.e., the candidate $s$ that produces the smallest length of the 
$\operatorname{MST}(P \cup \{s\})$.
Let $T^*$ be the union of $\operatorname{MST}(P)$ and 
$\operatorname{MST}(P \cup \{s\})$, as in \cref{fig: union mst mstt}.
For a candidate Steiner point $s$, the savings gained by including
$s$ in the MST are calculated by 
computing the sum $\Delta_s$ of the lengths of the longest edge on each cycle of $T^*$ 
(excluding the edges incident to $s$),
and then subtracting the sum $\sigma_s$ of the 
lengths of the edges incident to $s$ in
$\operatorname{MST}(P \cup \{s\})$.
For example, in \cref{fig: union mst mstt c}, the candidate edges on the
left cycle are $(b,d)$ and $(d,a)$, and on the right cycle they are 
$(a,e)$ and $(e,c)$;
we sum the lengths of the longest candidate edge from each cycle, 
i.e., $(d,a)$ and $(a,e)$,
and subtract the sum of the lengths of edges $(s,a)$, $(s,b)$, and $(s,c)$
to calculate the savings we get from choosing $s$ as the 
solution Steiner point.
Note that the longest edge on the cycle $(s,b,d,a,e,c)$ is either $(d,a)$ or $(a,e)$.
As will be seen in the proof of \cref{theorem: main result},
the sum $\sigma_s$ is computed 
when determining $s$.
What remains to find are the lengths of the longest edges
of $\operatorname{MST}(P)$ on the cycles of $T^*$ 
which will then be used to compute $\Delta_s$.

\clearpage

\begin{restatable}[{Bose \etal $2004$ \cite[paraphrased Theorem $2$]{DBLP:journals/comgeo/BoseMNSZ04}}]{lemma}{mstUpdate}
\label{lemma: mstUpdate}
A set of $n$ points in $\mathbb{R}^2$ can be preprocessed
in $O(n\log n)$ time into a data structure of size $O(n)$ such that
the longest edge on the path between any two points in the MST 
can be computed in $O(1)$ time.
\end{restatable}

\cref{lemma: mstUpdate} by Bose \etal \cite{DBLP:journals/comgeo/BoseMNSZ04}
tells us that with $O(n\log n)$ preprocessing of $\operatorname{MST}(P)$,
we can compute the sum $\Delta_s$ in $O(1)$ time for
each candidate Steiner point.\footnote{A similar result was shown in 
Monma and Suri \cite[Lemma $4.1$,~pg.~$277$]{monma1992transitions}.}
First an auxiliary binary tree is computed whose nodes correspond
to edge lengths and whose leaves correspond to points of $P$.
This tree has the property that the LCA of two leaves is the longest
edge on the path between them in $\operatorname{MST}(P)$.
Using $O(n)$ preprocessing time and space on the auxiliary tree
they perform $O(1)$-time 
LCA queries \cite{DBLP:conf/latin/BenderF00,DBLP:journals/siamcomp/HarelT84,DBLP:journals/siamcomp/SchieberV88}.

It was shown by Rubinstein \etal~\cite{degree5stpts} that the addition of
a Steiner point of degree five cannot appear in a least-cost
planar network, meaning we can restrict our
search to Steiner points of degree three or four in the Euclidean norm.
Let a simple cycle\footnote{A \emph{simple cycle} 
is a cycle where no vertex is repeated except the first vertex.} 
through points $s',s,s''$ be denoted 
$\mathcal{C}_{s'ss''}$.
In the next two lemmas, when referring to cycle 
$\mathcal{C}_{s'ss''}$, we are referring to the cycle in
$T^*$ that passes through the Steiner point $s$ and its two 
terminal-neighbours $s'$ and $s''$.
When we refer to the longest edge on the cycle
we mean the longest edge on the path between $s'$ and $s''$ 
in $\operatorname{MST}(P)$.

\begin{lemma}[Brazil \etal $2015$ \cite{brazil2015generalised}, rephrased Theorem $11$]
\label{lemma: deg 3 connected}
Let the Steiner point $s$ in $T^*$ have three neighbours
$a$, $b$, and $c$.
Using $O(1)$-time LCA queries,
removing the longest edges on the $\binom{3}{2}$ cycles
$\mathcal{C}_{bsa}$, $\mathcal{C}_{bsc}$, and
$\mathcal{C}_{csa}$ results in a tree which 
can be computed in $O(1)$ additional time and space.
\end{lemma}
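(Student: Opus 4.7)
The plan is to exploit \cref{lemma: mstUpdate} to obtain each longest cycle edge in $O(1)$ time, then argue by a short case analysis that these three cycle-edges form a multiset of exactly two distinct edges of $\operatorname{MST}(P)$, and finally verify by an edge count plus a connectivity check that deleting those two edges from $\operatorname{MST}(P) \cup \{sa, sb, sc\}$ produces a spanning tree on $P \cup \{s\}$.

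First I would fix notation. Let $m$ be the unique vertex of $\operatorname{MST}(P)$ at which the three $\operatorname{MST}(P)$-paths from $a$, $b$, and $c$ meet; then the $a$-to-$b$ path is the concatenation of the $a$-$m$ and $m$-$b$ sub-paths, and analogously for the other two pairs. For each pair $xy \in \{ab, bc, ac\}$, the edges of the cycle $\mathcal{C}_{xsy}$ other than $sx$ and $sy$ are precisely the edges of the $x$-to-$y$ path in $\operatorname{MST}(P)$. By \cref{lemma: mstUpdate}, the longest such edge $e_{xy}^*$ is produced by a single $O(1)$-time LCA query, so three queries yield $e_{ab}^*, e_{bc}^*, e_{ac}^*$ in $O(1)$ time.

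The structural heart of the proof is the claim $|\{e_{ab}^*, e_{bc}^*, e_{ac}^*\}| = 2$. Let $e^*$ be the heaviest of the three; it is unique by the paper's distinct-weight assumption on $\operatorname{MST}(P)$. Without loss of generality $e^* = e_{ab}^*$, and by symmetry assume $e^*$ lies on the $a$-$m$ sub-path (the $m$-$b$ case is symmetric). Then $e^*$ also lies on the $a$-to-$c$ path, so it is a candidate for $e_{ac}^*$, and maximality forces $e^* = e_{ac}^*$. However, the $b$-to-$c$ path uses only the $m$-$b$ and $m$-$c$ sub-paths, and therefore avoids $e^*$, so $e_{bc}^* \neq e^*$. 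A triple coincidence is ruled out because no single edge can lie on all three pairwise paths simultaneously.

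To conclude, I would perform an edge count and a connectivity check. Starting from $\operatorname{MST}(P)$ ($n$ vertices, $n - 1$ edges) and adjoining $s$ with the three edges $sa, sb, sc$ yields $n + 2$ edges on $n + 1$ vertices; deleting the two distinct edges identified above leaves exactly $n$ edges, the edge count of a spanning tree on $n + 1$ vertices. The main obstacle is verifying connectivity. Each deleted edge splits $\operatorname{MST}(P)$ into two subtrees, and a short case analysis on which of the three sub-paths $\{a\text{-}m, m\text{-}b, m\text{-}c\}$ contain the two deletions shows that each of the resulting at-most-three components of $\operatorname{MST}(P)$ contains at least one of $\{a, b, c\}$; the star $\{sa, sb, sc\}$ at $s$ then stitches the components into a single tree. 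All vertices of $\operatorname{MST}(P)$ outside the paths among $a, b, c$ hang off one of the sub-paths and so remain inside one of these components. The total additional work beyond the three LCA queries is $O(1)$ time and space to identify the distinct pair and record which edges are deleted.
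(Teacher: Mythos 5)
Your proof is correct. Note that the paper itself does not prove this lemma: it is imported verbatim (as a rephrasing of Theorem~11) from Brazil~\etal~\cite{brazil2015generalised}, so there is no in-paper argument to compare against; what you have supplied is a self-contained elementary proof of the cited fact. Your route --- reduce each cycle's longest edge to a bottleneck query on the $\operatorname{MST}(P)$-path via \cref{lemma: mstUpdate}, show via the median vertex $m$ that the heaviest of the three bottleneck edges is simultaneously the answer for exactly two of the three pairs (so exactly two distinct edges are deleted, using the paper's distinct-weight assumption), and then close with the edge count $n+2-2=n$ on $n+1$ vertices plus the observation that each of the three components of $\operatorname{MST}(P)$ minus two edges contains one of $a,b,c$ and is re-stitched by the star at $s$ --- is sound, and the cyclic-inequality consequence of your argument (three pairwise-distinct bottleneck edges on the three $m$-subpaths would force $\max(a\text{-}m)>\max(m\text{-}b)>\max(m\text{-}c)>\max(a\text{-}m)$) is exactly why a third deletion never occurs. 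Two small points worth making explicit if you write this up: the degenerate case where $m$ coincides with one of $a,b,c$ (one subpath empty) still yields exactly two distinct edges, one on each nonempty subpath, and your connectivity case analysis covers it; and the graph you delete from should be stated as $\operatorname{MST}(P)\cup\{sa,sb,sc\}$, which is the reading consistent with how the paper uses the lemma to justify the identity $|\operatorname{MST}(P\cup\{s\})|=|\operatorname{MST}(P)|+\sigma_s-\Delta_s$.
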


\begin{lemma}[Brazil \etal $2015$ \cite{brazil2015generalised}, rephrased Theorem $11$]
\label{lemma: deg 4 connected}
Let the Steiner point $s$ in $T^*$ have four neighbours
$a$, $b$, $c$, and $d$.
Using $O(1)$-time LCA queries,
removing the longest edges on the $\binom{4}{2}$ cycles
$\mathcal{C}_{bsa}$, $\mathcal{C}_{csb}$, 
$\mathcal{C}_{dsc}$, $\mathcal{C}_{csa}$,
$\mathcal{C}_{dsb}$, and $\mathcal{C}_{dsa}$ results in a tree
which can be computed in $O(1)$ additional time and space.
\end{lemma}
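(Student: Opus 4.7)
The plan is to reduce the claim to \cref{lemma: mstUpdate} combined with a constant number of LCA queries. I would first reuse the auxiliary tree already built in $O(n\log n)$ preprocessing and $O(n)$ space via \cref{lemma: mstUpdate}, which supports $O(1)$-time queries returning the longest edge on any path in $\operatorname{MST}(P)$. This preprocessing has been paid for globally and is not charged against the ``additional'' budget of the lemma.

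Next, I would enumerate the $\binom{4}{2}=6$ unordered pairs $\{x,y\}\subseteq\{a,b,c,d\}$ of neighbours of $s$. For each pair I would issue one LCA query, which in $O(1)$ time returns the longest edge of the $\operatorname{MST}(P)$-path from $x$ to $y$, i.e.\ the longest edge on the cycle $\mathcal{C}_{xsy}$ under the convention fixed just before \cref{lemma: deg 3 connected}. Six queries together with storage of their at-most-six distinct outputs consume $O(1)$ additional time and $O(1)$ additional space, matching the budget claimed by the lemma.

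For the structural claim that removing these edges from $T^*$ leaves a tree, I would appeal to the corresponding statement of Brazil \etal~\cite{brazil2015generalised}, but I would include the following supporting intuition to keep the argument self-contained. Since $\operatorname{MST}(P)$ is acyclic, every simple cycle of $T^*$ must use exactly two of the four $s$-incident edges, so the six cycles $\mathcal{C}_{xsy}$ exhaust all simple cycles of $T^*$; its cyclomatic number is therefore $k-1=3$. Under our general-position assumption of distinct MST edge-weights, the heaviest path-edges returned by the six queries collapse to exactly three distinct edges, and by the cycle property these three edges hit every cycle, so their removal yields a connected acyclic spanning subgraph of $P\cup\{s\}$, i.e.\ a tree.

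The main obstacle I anticipate is not the LCA mechanics but this last structural observation---confirming that the distinct heaviest path-edges number exactly three and simultaneously intersect every cycle of $T^*$. I would discharge it by invoking the matroid exchange property for MSTs (equivalently, iteratively applying the cycle property along a fundamental cycle basis), which mirrors the $k=3$ argument behind \cref{lemma: deg 3 connected} with one extra neighbour pair to bookkeep.
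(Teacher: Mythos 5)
Your proposal matches the paper's treatment of this lemma: the paper supplies no proof of its own, importing the structural claim (that deleting the six longest cycle edges leaves a tree) directly from Brazil et al.\ \cite{brazil2015generalised}, Theorem 11, and adding only the observation that each ``longest edge on $\mathcal{C}_{xsy}$'' is the bottleneck edge of the $\operatorname{MST}(P)$-path from $x$ to $y$ and is therefore returned by an $O(1)$-time query to the auxiliary tree of \cref{lemma: mstUpdate} --- which is precisely the content of your first two paragraphs. Your supplementary sketch is sound in outline and its one asserted-but-unproved step (that the six bottleneck edges collapse to exactly three distinct ones, each separating the four neighbours into the components re-joined through $s$) is indeed true --- the three edges are the three single-linkage merge edges of $\{a,b,c,d\}$ in Kruskal's order --- but since you, like the paper, ultimately defer the structural claim to Brazil et al., there is no gap to report.
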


By \cref{lemma: deg 3 connected,lemma: deg 4 connected} 
from Brazil \etal~\cite{brazil2015generalised},
removing the longest edges 
on the cycles between the neighbours of $s$ in $T^*$
results in a tree (i.e., one connected component).

We are now ready to finish proving \cref{theorem: main result}.
Though stated for input \emph{lines}, it is clear that the same result
applies for \emph{rays} and \emph{line segments}.
Rather than creating point-location data structures in the OVDs to locate
the endpoints and thus determine the correct labels (e.g., in the case
where $\gamma$ is a segment that lies completely in OVRs and has no 
intersections with the OVD boundaries), we notice that the intersections
with the OVDs give us endpoints for intervals on $\gamma$.
We can extend the line through $\gamma$ and run \cref{alg: algo} 
on this extension, adding special interval-delimiting points 
corresponding to the endpoints of the ray or line segment, and then only 
process intervals within the interval-delimiting points.
These special interval-delimiting points do not affect the labels of the 
intervals on either side of them.

\begin{algorithm}

	\SetKwInOut{Input}{input}\SetKwInOut{Output}{output}
	
	\Input{set $P \subset \mathbb{R}^2$ of $n$ points, a line $\gamma$}
	\Output{the MStT for $P$ using at most one Steiner point $s \in \gamma$}
	\BlankLine
	$s = (\infty, \infty)$\;
	$T = \operatorname{MST}(P)$\;
	$T' =$ longest-edge auxiliary tree built from $T$\;
	$\mathbb{L} = \emptyset$\;
	\For{$i=0, i \leq 5, ++i$}{
		Compute $\operatorname{OVD}_{i}$ and augment the
		edges with labels 
		corresponding to the sites of the two adjacent OVRs\;
		$\mathbb{L} = \mathbb{L} \cup $ 
		set of rays and segments defining the edges of 
		$\operatorname{OVD}_{i}$\;
	}
	\For{each $\ell \in \mathbb{L}$}{
		Compute the intersections of $\ell$ with $\gamma$ 
        adding the label of $\ell$ to the appropriate intersection points (and implicitly the intervals)\; 
	}
    Sort the intersection points along $\gamma$ to create 
    the list of labelled intervals produced by these intersections\; 
    
	\For{each interval $I$ along $\gamma$}{
		
		\For{each subset of the labels of $I$ of sizes $3$ and $4$}{
			$u$ = an optimal Steiner point along $\gamma$ 
			for the subset considered\;
			\tcc{This computation of an optimal point also gives us the sum $\sigma_u$}			
			\If{$|\operatorname{MST}(P \cup \{u\})| < 
				|\operatorname{MST}(P \cup \{s\})|$}
				{\label{if: better solution}
				\tcc{The test in this condition is performed using the sum $\sigma_u$ and 
				using $T'$ to compute the sum $\Delta_u$}				
				$s$ = $u$\;
			}			
		}
	}
	\If{$|\operatorname{MST}(P \cup \{s\})| < |T|$}{	
		\Return{$\operatorname{MST}(P \cup \{s\})$}\;
	}
	\Else{
		\Return{$T$}\;	
	}
	\caption{Restricted 1-Steiner}
	\label{alg: algo}
\end{algorithm}

\MainResultTheorem*

\begin{proof}
The tree $T = \operatorname{MST}(P)$ and its length
are computed in $O(n\log n)$ time and $O(n)$ space
by computing the Delaunay triangulation
in $O(n\log n)$ time and $O(n)$ space and then computing
the MST of the Delaunay triangulation
in $O(n)$ time and space \cite{DBLP:journals/siamcomp/CheritonT76,DBLP:books/sp/PreparataS85}.
By \cref{lemma: mstUpdate}, in $O(n\log n)$ time and 
$O(n)$ space we compute the longest-edge 
auxiliary tree $T'$ and preprocess it to answer LCA queries in $O(1)$ time.
Each of the six $\operatorname{OVD}$s is then computed
in $O(n\log n)$ time and $O(n)$ space~\cite{brazil2015generalised,DBLP:journals/ipl/ChangHT90,DBLP:conf/compgeom/ChewD85}.
In $O(n)$ time and space we extract $\mathbb{L}$, 
the set of rays and segments defining each OVR 
of each $\operatorname{OVD}$.
While computing the OVDs, 
in $O(n)$ time we add labels to the boundary rays and segments
describing which OVD-cone defined them and the two sites corresponding to 
the two OVRs they border.

Since $\gamma$ is a line, it intersects any element of
$\mathbb{L}$ at most once.
Therefore, computing the intersections of $\gamma$ with $\mathbb{L}$ takes 
$O(n)$ time and space.
Assume without loss of generality that $\gamma$ is the $x$-axis.
Given our $O(n)$ intersection points, 
we can make a list of the $O(n)$
intervals they create along $\gamma$ in $O(n\log n)$ time and $O(n)$ space
by sorting the intersection points by $x$-coordinate and then walking
along $\gamma$. 
During this process we also use the labels of the elements of $\mathbb{L}$
to label each interval with its six potential neighbours described above
in $O(1)$ time per interval.

An optimal Steiner point has degree greater than two
by the triangle inequality.
Rubinstein et al.\ \cite{degree5stpts} showed
that an optimal Steiner point has degree no more than four.
Therefore an optimal Steiner point has degree three or four.
Once we have computed our labelled intervals,
we loop over each interval looking for the solution by finding the 
optimal placement of a Steiner point in the interval for a constant number of fixed
topologies.
Consider an interval $I$ and its set of 
potential neighbours $P' \subset P$ of
size at most six.
For each subset $\mathbb{P}$ of $P'$ of size three and four (of which
there are $O(1)$ such subsets),
we compute a constant number of  candidate optimal Steiner points in $\gamma$.
Note that $\gamma$ is actually a polynomial function, $\gamma (x)$.
Our computation is done using the following 
function 
$d_{\mathbb{P}}(x)$,
where $a_{x}$ and $a_{y}$
are the $x$- and $y$-coordinates of point $a$ respectively,
and $\gamma (x)$ is the evaluation of $\gamma$ at $x$:
$d_{\mathbb{P}}(x) = \sum_{a \in \mathbb{P}}\sqrt{(a_x - x)^2 + (a_y - \gamma (x))^2}$.
We then take the derivative of this function and solve 
for the global minima by finding the roots
within the domain specified by the endpoints of $I$.
Since the size of $\mathbb{P}$ is bounded by a constant and since
the degree of the polynomial $\gamma$ is a constant, 
this computation takes $O(1)$
time and $O(1)$ space and the number of global minima is a constant.
Note that the value of this function at a particular 
$x$ for a particular
$\mathbb{P}$ tells us the sum $\sigma_u$ 
of edge lengths from the point 
$u=(x, \gamma (x))$ to the points in $\mathbb{P}$.
We associate this value with $u$.
Out of the $O(1)$ candidate points, we choose the one for which
$d_{\mathbb{P}}(x)$ is minimum.
We can break ties arbitrarily, since a tie means the points offer
the same amount of savings to the MST since they both
have the same topology in the MST 
(meaning they have the same cycles in
$\operatorname{MST}(P) \cup \operatorname{MST}(P \cup \{u\})$), 
and since the value of $d_{\mathbb{P}}(x)$ being the same means that
the sum of adjacent edges is the same.

Once we have our $O(1)$ candidate optimal Steiner points for $I$,
we need to compare each one against our current best solution $s$.
In other words, for each candidate $u$
we need to compare $|\operatorname{MST}(P \cup \{u\})|$
with $|\operatorname{MST}(P \cup \{s\})|$.
We take advantage of the following:
if we compute the union of $\operatorname{MST}(P)$ and 
$\operatorname{MST}(P \cup \{u\})$
we get at most $\binom{4}{2} = 6$ simple cycles 
through $u$.
Let this connected set of cycles be $Q$.
We have 
$|\operatorname{MST}(P \cup \{u\})| = |\operatorname{MST}(P)| +
\sigma_u
- \Delta_u$,
where $\Delta_u$ is the sum of the longest edge in each cycle of $Q$ 
excluding from consideration the edges incident to $u$.
By \cref{lemma: mstUpdate}, we can compute $\Delta_u$ in $O(1)$ time using $T'$.
By \cref{lemma: deg 3 connected,lemma: deg 4 connected},
removing the longest edge from each cycle of $Q$ results in a tree.
If $|\operatorname{MST}(P \cup \{u\})| < 
|\operatorname{MST}(P \cup \{s\})|$ we set $s=u$.

Finally, we check if 
$|\operatorname{MST}(P \cup \{s\})| < |T|$.
If so, we return $\operatorname{MST}(P \cup \{s\})$.
Otherwise we return $T$.

Now we show the space and time optimality.
The $\Omega(n)$-space lower bound comes from the fact that we have to
read in the input.
The $\Omega(n\log n)$-time lower bound comes from a reduction from
the \emph{closest pair} problem (\textbf{CPP}).
The CPP is where we are given $n$ points in $\mathbb{R}^2$ and we are
supposed to return a closest pair with respect to Euclidean distance.
The CPP has an $\Omega(n \log n)$-time 
lower bound \cite[Theorem $5.2$]{DBLP:books/sp/PreparataS85}.
Indeed, given an instance of CPP, we can transform it into our problem 
in $O(n)$ time by using the points as the input points $P$ 
and choosing an arbitrary $\gamma$.

Given the solution to our problem, we can find a closest pair
in $O(n)$ time by walking over the resulting tree.
First, remove the Steiner point (if any) and its incident edges
to break our tree up into $O(1)$ connected components.
Consider one of these components $\mathbb{C}$.
$\mathbb{C}$ may contain both points of multiple closest pairs, 
or none.
Imagine $\mathbb{C}$ contained both points for exactly one closest pair.
Then the edge connecting them will be in $\mathbb{C}$ and
it will be the edge with minimum-weight in $\mathbb{C}$;
otherwise it contradicts that we had a minimum-weight tree.
Imagine $\mathbb{C}$ contained both points for multiple closest pairs.
Pick one of the closest pairs.
If $\mathbb{C}$ does not contain the edge $e$
connecting the two points of the 
pair, then there is a path between them in $\mathbb{C}$ consisting 
of minimum-weight edges (whose weights match $e$) 
connecting other closest pairs;
otherwise we contradict the minimality of our tree
or that both points were in the same connected component.
If no component contains both points of a closest pair, then 
the path between a closest pair goes through the Steiner point.
Once again, choose a closest pair $(a,b)$ and
let the edge connecting this closest pair be $e$.
Due to the minimality of our tree, the weight of 
every edge on the path between $a$ and $b$ is no more than that of $e$.
However, since no component contains a closest pair, that means
that $a$ and $b$ are incident to the Steiner point.
Therefore, we get a solution to the CPP by walking over our resulting
tree and returning the minimum among a minimum-weight edge connecting 
neighbours of the Steiner point and a minimum-weight edge seen
walking through our tree excluding edges incident to the Steiner point.
\end{proof}

\begin{restatable}{corollary}{jLinesCorollary}
\label{corollary: j lines}
Given a set $P$ of $n$ points in the Euclidean plane
and $j$  lines $\Gamma = \{\gamma_{1}, \ldots \gamma_{j}\}$,
by running \cref{alg: algo} for each $\gamma \in \Gamma$, 
in $O(jn\log n)$ time and $O(n + j)$ space 
a minimum-weight tree is computed
that connects all points in $P$ 
using at most one point $s \in \bigcup_{i=1}^{j} \gamma_i$.
\end{restatable}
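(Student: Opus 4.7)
The plan is to reduce the $j$-line case to $j$ independent invocations of \cref{alg: algo}, taking care to share preprocessing across runs so that the space bound stays at $O(n+j)$ rather than $O(jn)$. The key observation is that an optimal solution uses at most one Steiner point $s$, and that point lies on exactly one of the lines $\gamma_i$. Hence an optimum over $\bigcup_{i=1}^{j}\gamma_i$ must coincide with an optimum over some single $\gamma_i$, so it suffices to compute the best restricted $1$-Steiner tree for each $\gamma_i$ and then return the cheapest among the $j$ candidates (compared also against $\operatorname{MST}(P)$).

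First I would compute $T=\operatorname{MST}(P)$ and the longest-edge auxiliary tree $T'$ once, in $O(n\log n)$ time and $O(n)$ space by \cref{lemma: mstUpdate}. These structures depend only on $P$ and are reused across all $j$ invocations. I would then iterate over the lines $\gamma_1,\dots,\gamma_j$, and for each $\gamma_i$ invoke \cref{alg: algo} while passing in the already-computed $T$ and $T'$ so that the algorithm only performs the per-line work: building the six OVDs, intersecting them with $\gamma_i$, sorting intersections to obtain the $O(n)$ labelled intervals, and, for each interval, performing the $O(1)$ per-subset optimization and $O(1)$-time LCA-based evaluation. By \cref{theorem: main result}, each such invocation runs in $\Theta(n\log n)$ time and $\Theta(n)$ space; crucially, the per-line workspace can be discarded before moving on to $\gamma_{i+1}$, so it is reused rather than accumulated.

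Throughout the loop I would maintain just a single ``best so far'' record: the minimum weight encountered and a concise descriptor of the corresponding tree (the line index $i$ and the Steiner point $s$, from which the tree can be reconstructed in $O(n)$ time at the end using the shared $T'$). After processing all $j$ lines, I would compare the best candidate against $|T|$ and return whichever is smaller, exactly as \cref{alg: algo} does in its final step. Summing costs gives $O(n\log n)+j\cdot O(n\log n)=O(jn\log n)$ time. For space, the shared structures use $O(n)$, the input lines use $O(j)$, the per-line workspace uses $O(n)$ but is reused, and the best-so-far record uses $O(1)$, giving the stated $O(n+j)$ bound.

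The only mildly delicate point, and the one I would be most careful about, is the space accounting: a naive implementation that stores all $j$ candidate Steiner trees would use $\Theta(jn)$ space. Avoiding this requires (i) sharing $T$ and $T'$ across invocations, (ii) freeing the OVDs, the interval list, and all per-line scratch data between iterations, and (iii) storing only the scalar cost and the descriptor $(i,s)$ of the current best candidate, deferring the explicit reconstruction of the output tree until after the loop terminates. Once these bookkeeping details are in place, the correctness follows immediately from \cref{theorem: main result} applied $j$ times and from the fact that a single Steiner point lies on at most one line.
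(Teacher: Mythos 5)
Your proposal is correct and matches the paper's (largely implicit) argument: the corollary is proved exactly by running \cref{alg: algo} once per line, observing that a single Steiner point lies on only one $\gamma_i$, and keeping only the best candidate while reusing the $O(n)$ per-line workspace so the space stays at $O(n+j)$. Your explicit attention to sharing $T$ and $T'$ and to discarding per-line scratch data is the same bookkeeping the paper relies on to get the stated bounds.
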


Until this point, we have implicitly assumed
that the $\operatorname{MST}(P)$ is unique. 
The algorithm still produces a correct result if 
this is not the case.
To see this, consider the case when 
there are multiple MSTs for $P$.
Imagine we have computed a Steiner tree $\tau$
using a MST $T$ of $P$ and let the Steiner point be $s$.
Consider the overlay $T^* = T \cup \tau$.
Now consider a Steiner tree $\tau'$ 
produced using a different MST $T'$ of $P$. 
Let the Steiner point of $\tau'$ also be $s$ and let the set of neighbours
of $s$ be the same in $\tau$ and $\tau'$.
Consider the overlay $T^\clubsuit = T' \cup \tau'$.
If we compare the differences between $T^*$ and $T^\clubsuit$, 
there are two cases.
In the first case, the cycles through $s$ are all the same, 
in which case the change in the MST did not affect 
the Steiner tree computation.
In the second case, at least one cycle through $s$ and its neighbours
is different.
This means that for some pair of neighbours of $s$, 
there is an edge in $T$ on the path between them that is swapped
for an edge of equal weight between some other pair 
of vertices of $P$ in $T'$.
The weight of the two trees $T$ and $T'$ is the same.
Since $s$ is connected to the same neighbours,
$\sigma_s$ is the same for both $\tau$ and $\tau'$.
So what may change is $\Delta_s$.
However, the values that compose $\Delta_s$ are selected using
the auxiliary tree that gives the bottleneck edge between two vertices
of the MST.
Since a MST on a point set $P$ is transformed into another
MST on $P$ by substituting one edge at a time 
for one of equal weight,
if the two neighbours of $s$ have bottleneck edges of different weights
in $T$ and $T'$, that implies that one of the MSTs is shorter than
the other.
Thus, $\Delta_s$ is the same for both $\tau$ and $\tau'$,
and $|\tau|=|\tau'|$.

There is another case to consider.
Let the Steiner points for $\tau$ and $\tau'$ be $s$ and 
$s'$ respectively.
They may or may not be the same point,
but let them have different sets of neighbours.
If $|\tau|\neq |\tau'|$, then
the savings for one tree are larger than the other; 
i.e., without loss of generality,
$\Delta_s - \sigma_s < \Delta_{s'} - \sigma_{s'}$.
However, when computing $\tau$, the algorithm considered $s'$
and its neighbour set and so computed the same $\sigma_{s'}$.
That implies that the sum of bottleneck edges is different between the
two trees $T$ and $T'$, which (as mentioned above) is a contradiction.
Thus, the algorithm produces a MStT even if 
$\operatorname{MST}(P)$ is not unique.

\section{Towards Generalization}
\label{sec: generalization}
In \cref{sec: k st points} we show how
to adapt the $O(n^{2k})$-time algorithm of 
Brazil \etal~\cite{brazil2015generalised}
to solve the $k$-Steiner tree problem for $k>1$ in $O((jn)^k)$ time
when the Steiner points
are restricted to lie on $j$ input lines
and when the input abides by the restrictions 
imposed by Brazil \etal
In \cref{sec: norms,sec: cost} we show that 
under certain restrictions, we can apply the results of
Brazil \etal~\cite{brazil2015generalised} to our restricted $k$-Steiner tree
problem allowing \cref{alg: algo} and the adapted $k$-Steiner tree algorithm
to maintain the same time and space bounds
while solving the problem for norms other than 
Euclidean and tree cost functions other than the sum of the edge-weights.
Finally, in \cref{sec: from lines to curves} we show that we
can solve the restricted $k$-Steiner tree problem when the set of Steiner
points is constrained to
a restricted class of $j$ input curves rather than lines.
The running time comes to depend on $j$, the 
runtime of certain primitive operations, and the complexity of the
zone of the curves in an arrangement of lines.

\subsection{\texorpdfstring{$k$}{k} Steiner Points}
\label{sec: k st points}
Given a set $P$ of $n$ points in the Euclidean plane $\mathbb{R}^2$
and a fixed constant positive integer $k$,
the algorithm of Brazil \etal \cite{brazil2015generalised} solves
the $k$-Steiner tree problem in $O(n^{2k})$ 
time with an $O(n^2)$-time preprocessing step for $k=1$ and an
$O(n^3)$-time preprocessing step for $k>1$.\footnote{As summarized in the survey by Brazil and Zachariasen
\cite{brazilSteinerSurveyBook}, for the bottleneck $k$-Steiner tree problem, 
i.e., the $k$-Steiner tree problem where the goal is to minimize the
length of the longest edge of the resultant tree for the given norm,
Bae \etal \cite{DBLP:journals/ipl/BaeLC10}
presented a $\Theta(n\log n)$-time and $O(n^2)$-time algorithm 
for $k=1$ and $k=2$ respectively in the Euclidean plane,
while Bae \etal \cite{DBLP:journals/algorithmica/BaeCLT11}
presented an $O((k^{5k}2^{O(k)})(n^k + n\log n))$-time algorithm for the $L_p$ metric
for a fixed rational $p$ with $1<p<\infty$, as well as
an $O(n\log n)$-time algorithm for $k=1$ in the $L_1$ metric, and  
an $O((7k+1)^{7k-1}k^4\cdot n\log^2 n)$-time algorithm for the $L_1$ and $L_{\infty}$ metrics.}
Below we show that
straightforward adjustments to the algorithm of 
Brazil \etal~\cite{brazil2015generalised}
allow their algorithm to be used when Steiner
points are constrained to lie on $j$ input lines 
$\gamma_1, \ldots \gamma_j$.

\begin{definition}[Fixed topology $k$-Steiner tree problem, Brazil \etal $2015$ \cite{brazil2015generalised} \textsection $4$]
\label{def: fixed top stein}
Given a set $A$ of at most $6k$ embedded terminals, a set $S$ of $k$
free (i.e., non-embedded) Steiner points, and a tree topology $\tau$ 
spanning $A \cup S$, find the coordinates of the Steiner points
(i.e., find the set $S$) such that the sum of the edge-weights in 
the tree is minimized.
\end{definition}

\begin{myRest}[Brazil \etal $2015$ \cite{brazil2015generalised}]
\label{rest: four}
A solution to the fixed topology $k$-Steiner tree problem is computable within
any fixed precision in $f(k)$ 
time, where $f(k)$ 
is a function dependent only on $k$.
\end{myRest}

Since $k$ is a constant, \cref{rest: four} says the 
fixed topology $k$-Steiner tree problem can be solved in $O(1)$ time.

\begin{definition}[Feasible internal topology, viable forest, minimum $F$-fixed spanning tree, 
Brazil \etal $2015$ \cite{brazil2015generalised}]
\label{def: viable forest f-fixed}
A forest $F$ is said to have 
a \emph{feasible internal topology} provided that
its node set is $A\cup S$ where $A \subseteq P$ 
is the set of leaves of $F$
and the Steiner points $S \subseteq \mathbb{R}^2$ are
the internal nodes.
A feasible internal topology with $|S| \leq k$ is called \emph{viable}
if and only if every Steiner point
in $S$ has at most six neighbours in $F$.
A shortest total-length tree $T_F$ on $P\cup S$ such that
the set of neighbours of Steiner points in $T_F$ is the same as in $F$
is referred to as a \emph{minimum $F$-fixed spanning tree}.
\end{definition}

Brazil \etal  \cite{brazil2015generalised} compute the overlaid OVD 
(as does the GPA \cite{oneStTreeProb})
which, similar to our intervals from \cref{alg: algo},
has the property that each region has associated with it a set of
points: one for each OVD overlaid.
For overlaid OVD region $R_i$, let this neighbour set be $C_P(R_i)$.
Under Restriction \ref{rest: four},
the algorithm of Brazil \etal~works as follows to produce a 
minimum-weight tree connecting all points in $P$ using at 
most $k$ points of $\mathbb{R}^2 \setminus P$:

\begin{enumerate}
\item \label{k st compute OVDs} Compute the overlaid OVD of $P$
\item \label{k st compute MST} Compute $T=\operatorname{MST}(P)$
\item \label{k st longest edge on path between points} Compute the longest edge on the path between every pair of points
$x$ and $y$ in $T$
\item \label{k st table if edge is on path between points} Compute a table $H$ whose entry is true for edge $e$ and terminals $y$ and $z$ if and only if $e$ is on the shortest path between $y$ and $z$ in $T$
\item \label{k st for each value of k and selection of ovd regions} For every $k'\leq k$ and each choice (with repetition) of $k'$ regions
$R_1,\ldots, R_{k'}$ of the overlaid OVD
\begin{enumerate}
	\item \label{k st associate steiner point with region} Associate the free Steiner point $s_i$ with region $R_i$
	\item \label{k st build subforest supergraph} Let $G$ be the graph consisting of the vertices 
	$\bigcup C_P(R_i) \cup \{s_1,\ldots,s_{k'}\}$, all edges 
	$(s_i,s_j), i\neq j,$ and all edges $(s_i, x)$ for every $x\in C_P(R_i)$
	\item \label{k st create set of viable subforests} Let $G^*$ be the set of all viable subforests of $G$
	\item \label{k st for each viable subforest} For each $\mathcal{F} \in G^*$
	\begin{enumerate}
		\item \label{k st solve fixed topology steiner problem} Solve the fixed topology $k$-Steiner tree problem for
		$\mathcal{F}$ to get the forest $F$
		\item \label{k st update MST info} Use $T$ and $F$ to compute a minimum $F$-fixed spanning 
		tree $T_F$ 
	\end{enumerate}
\end{enumerate}
\item \label{k st select solution tree} Let $T^*$ be a smallest total cost $T_F$ produced
\item \label{k st collect steiner points} Let $S$ be the set of Steiner points of $T^*$
\end{enumerate}

We now consider the worst-case time and space upper bounds of this algorithm.
Most of the time-bounds are discussed in 
Brazil et al.\ \cite{brazil2015generalised}.
\begin{itemize}
\item Step \ref{k st compute OVDs}, computing the overlaid OVD, 
runs in $O(n^2)$ time and space.\footnote{In the pseudocode for this algorithm in \cite{brazil2015generalised} there is a typo saying this is $O(n\log n)$ time, but the correct time bound, $O(n^2)$, is stated in other parts of the paper.}

\item Step \ref{k st compute MST}, 
computing $T=\operatorname{MST}(P)$, is done in
$O(n\log n)$ time and $O(n)$ space.

\item Step \ref{k st longest edge on path between points}, 
computing the longest
edge on the path between every pair of terminals in $T$, takes $O(n^2)$ time
and space.
This is done by computing the MST and doing a depth-first traversal 
from every terminal \cite[\textsection $1.4.3$ pg.\ $50$]{brazilSteinerSurveyBook}.

\item Step \ref{k st table if edge is on path between points}, computing the 
table $H$ that tells us if a specific edge is on the path 
in $T$ between two given points, takes $O(n^3)$ time and space. 
This is only done for $k>1$.

\item The loop on step 
\ref{k st for each value of k and selection of ovd regions}, 
choosing combinations for matching oriented OVD cells with up to $k$
Steiner points, has $O(n^{2k})$ iterations.
This follows because: 
\begin{itemize}
	\item each OVD is a linear arrangement of $O(n)$ complexity 
		\cite{DBLP:journals/ipl/ChangHT90};
	\item the overlay of a constant number of linear arrangements
			(each of whose complexity is $O(n)$)
			has $O(n^2)$ complexity 
			\cite[\textsection $8.3$]{DBLP:books/lib/BergCKO08} 
			\cite[\textsection $5.4$, \textsection $28$]{toth2017handbook};
	\item we are checking each topology that may be a solution.
	Since the solution will have $k^*$ Steiner points for $0 \leq k^* \leq k$, 
	we iterate through each topology for each choice of 
	up to $k$ Steiner points.
	This means associating one of the $O(n^2)$ faces of the overlaid OVD
	with each of our up to $k$ Steiner points.
	We allow multiple Steiner points to be associated with any given face 
	(i.e., a many-to-one relationship may exist)
	since Steiner points may be connected to other Steiner points. 
	Therefore, we have a sum of \emph{combinations with repetition} 
	of the up to $k$ Steiner points choosing from the $O(n^2)$ faces, giving us: \\
	\begin{align*}
	&\sum^{k}_{k' = 0}{\binom{n^2 + k' - 1}{k'}} \\ 
	=& \sum^{k}_{k' = 0}{\frac{(n^2+k'-1)!}{k'!(n^2-1)!}} \\
	=& \sum^{k}_{k' = 0}{\frac{n^2(n^2+1)(n^2+2)\cdots(n^2+k'-1)}{k'!}}
	\end{align*}
	By Stirling's approximation we have 
	$\left(\frac{k'}{e}\right)^{k'}\leq k'!$.
	Thus, we can upper-bound the expression as:
	\begin{align*}
	&\sum^{k}_{k' = 0}{\frac{n^2(n^2+1)(n^2+2)\cdots(n^2+k'-1)}{k'!}} \\
	\leq & \sum^{k}_{k' = 0}{\frac{e^{k'} n^2(n^2+1)(n^2+2)\cdots(n^2+k'-1)}{k'^{k'}}} \\
	\leq & \sum^{k}_{k' = 0}{\frac{3^{k'} (n^2+k'-1)^{k'}}{k'^{k'}}} \\	
	\leq & \sum^{k}_{k' = 0}{\left(\frac{3 (n^2+k')}{k'}\right)^{k'}} \\	
	\leq & \sum^{k}_{k' = 0}{\left(\frac{3n^2}{k'} + 3\right)^{k'}} \\	
	\leq & \sum^{k}_{k' = 0}{\left(\frac{4n^2}{k'}\right)^{k'}} \\	
	\leq & \sum^{k}_{k' = 0}{\left(\frac{4}{k'}\right)^{k'}n^{2k'}} \\
	\end{align*}	
	Since $k'$ is a constant, we have:
	\begin{align*}
	& \sum^{k}_{k' = 0}{\left(\frac{4}{k'}\right)^{k'}n^{2k'}} \\
	\in & \sum^{k}_{k' = 0}{O(n^{2k'})} \\
	\in & \quad O(kn^{2k})
	\end{align*}
	Since $k$ is a constant, this leaves us with $O(n^{2k})$.
\end{itemize}

\item Step \ref{k st associate steiner point with region}, 
associating the Steiner points with an oriented OVD cell, takes 
$O(k) = O(1)$ time and space.

\item Step \ref{k st build subforest supergraph},
building the complete graph on the Steiner points and adding the edges
to the appropriate candidate neighbours from their oriented OVD cells, 
takes $O(k^2) = O(1)$ time and space.

\item Step \ref{k st create set of viable subforests}, creating the set
of viable subforests, is mostly listed for exposition. 
Rather than computing this all at once, to save space
it would likely be set up as a function
call in the loop condition of step \ref{k st for each viable subforest} that 
returns the ``next'' viable subforest given some index counter, 
so we will not count the time and space requirements of this step.

\item Given $G$, grabbing the next viable subforest of $G^*$ in the loop
condition of step \ref{k st for each viable subforest} takes 
$O(k)$ time and space because a tree on $O(k)$ points has $O(k)$ edges and in the worst
case the next enumeration has to change all of them.
By Cayley's formula and the fact that each spanning forest is a 
subgraph of a spanning tree with $O(k)$ edges, the number of iterations of this loop
is $O(126^k\cdot k^{k}) = O(1)$ \cite{brazil2015generalised}.

\item By \cref{rest: four}, 
step \ref{k st solve fixed topology steiner problem} takes $f(k)=O(1)$
time and space.

\item The subroutine they invoke at step \ref{k st update MST info}
runs in $O(k^2)$ time and $O(k)$ space for $k=1$ and
$O(k^{2k +3}\cdot k!)$ time and $O(k^2)$ space
otherwise.
Either way, it is $O(1)$ time and space.

\item Step \ref{k st select solution tree} is also just for exposition.
At the end of step \ref{k st update MST info}, the cost of the new
tree is compared to the cost of the best tree seen which 
is then updated accordingly.
This takes $O(1)$ time and space in each iteration of the loop on step
\ref{k st for each viable subforest}.

\item Step \ref{k st collect steiner points}, reporting the Steiner points
in the solution tree, takes $O(k)=O(1)$ time and space.
\end{itemize}

We begin our adjustment.
Rather than computing the overlaid OVD, we compute $O(jn)$ intervals
on $\gamma_1, \ldots \gamma_j$ as in \cref{alg: algo}.
The main \emph{for-loop} at step 
\ref{k st for each value of k and selection of ovd regions} becomes: 
``for each choice (with repetition) of $k' \leq k$ \emph{intervals},
$I_1, \ldots I_{k'}$'', of which there are $O((jn)^k)$ iterations.\footnote{Note that since Steiner points may lie on different input lines,
we cannot simply run the basic algorithm $j$ times
as was done in \cref{corollary: j lines}.}
Steiner point $s_i$ is then associated with \emph{interval} $I_i$,
and rather than the ``neighbour set of \emph{region} $i$'', we use the
set of candidate neighbours of \emph{interval} $I_i$, much like we do
in \cref{alg: algo}.
The enumeration of the topologies in which we are interested
(i.e., the number of iterations of the inner-most \emph{for-loop}, step
\ref{k st for each viable subforest}) remains the same.
We replace their \cref{rest: four} with one reflecting our problem.
\begin{myRest}
\label{rest: our four}
A solution to the fixed topology $k$-Steiner tree problem where each Steiner
point is constrained to lie on its own specified line 
(not necessarily distinct from the lines of the other Steiner points)
is computable within
any fixed precision in $f(k)$ 
time and $f_{sp}(k)$ space, where $f(k)$
and $f_{sp}(k)$ are functions dependent only on $k$.
\end{myRest}
With \cref{rest: our four}, the first step of
the inner-most \emph{for-loop}, step
\ref{k st solve fixed topology steiner problem}, runs in $O(1)$ time and space. 
The second step of the inner-most \emph{for-loop},
step \ref{k st update MST info}, does not change and still
runs in $O(1)$ time and space.
However, as with our approach for \cref{alg: algo},
by \cref{lemma: mstUpdate}
we are able to use LCA queries
to avoid performing the $O(n^2)$ time and space
preprocessing in Step \ref{k st longest edge on path between points}.
Similarly, we are able to use LCA queries to avoid the
$O(n^3)$ time and space preprocessing
from Step \ref{k st table if edge is on path between points}.
In Brazil et al.\ \cite{brazil2015generalised},
for $k>1$, a TRUE/FALSE table is computed
in $O(n^3)$ time and space to be able to answer the following
query $\mathcal{Q}_{e,y,z}$ in $O(1)$ time:
given an edge $e$ of the MST $T$ and two vertices $y$ and $z$,
is $e$ on the path between $y$ and $z$ in $T$?

\clearpage

\begin{lemma}
\label{lemma: edge on path query}
A given set $P$ of $n$ points in $\mathbb{R}^2$
can be preprocessed in $O(n\log n)$ time and $O(n)$ space
to construct a data structure that supports
$\mathcal{Q}_{e,y,z}$ queries in $O(1)$ time.
\end{lemma}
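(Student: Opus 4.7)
The plan is to build $T = \operatorname{MST}(P)$ and augment it with a rooted Euler-tour index, which already encodes, for every MST edge, the partition of $T$ induced by that edge's removal. Compute $T$ in $\Theta(n\log n)$ time and $\Theta(n)$ space from the Delaunay triangulation of $P$, as in the proof of \cref{theorem: main result}. Then root $T$ at an arbitrary vertex and perform one depth-first traversal to assign to every vertex $v$ entry and exit times $\mathrm{tin}[v], \mathrm{tout}[v]$; this costs $O(n)$ additional time and space, and afterwards $v'$ is a descendant of $v$ iff $\mathrm{tin}[v]\leq \mathrm{tin}[v']\leq \mathrm{tout}[v]$. During the same DFS, record for each MST edge $e$ its deeper endpoint $\mathrm{child}(e)$.

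The key observation is that deleting an edge $e$ from the tree $T$ splits $T$ into exactly two subtrees, one of which is precisely the subtree rooted at $\mathrm{child}(e)$. Consequently, for the query $\mathcal{Q}_{e,y,z}$, the edge $e$ lies on the unique $y$-$z$ path in $T$ if and only if exactly one of $y$ and $z$ is a descendant of $\mathrm{child}(e)$. Setting $c=\mathrm{child}(e)$ and writing $[L,R]=[\mathrm{tin}[c],\mathrm{tout}[c]]$, the query reduces to testing whether exactly one of $\mathrm{tin}[y]\in[L,R]$ and $\mathrm{tin}[z]\in[L,R]$ holds, which is $O(1)$.

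Putting the pieces together, the preprocessing cost is dominated by the $\Theta(n\log n)$ time and $\Theta(n)$ space needed to build $T$; the DFS, the in/out tables, and the $\mathrm{child}(e)$ annotations each add only $O(n)$ time and space, and every subsequent query runs in $O(1)$ time, establishing the claimed bounds. The only thing to realize is that the cubic table of Brazil \etal~is unnecessary: the answer to $\mathcal{Q}_{e,y,z}$ depends only on which side of the removed edge each of $y$ and $z$ falls on, and a single rooted DFS plus the standard Euler-tour trick captures this information exactly, in line with the LCA-based preprocessing already used in \cref{lemma: mstUpdate}.
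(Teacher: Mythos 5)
Your proof is correct, and it takes a genuinely cleaner route than the paper's. The paper likewise computes $T=\operatorname{MST}(P)$, roots it arbitrarily, and spends $O(n)$ further preprocessing, but it relies on general $O(1)$-time LCA queries and answers $\mathcal{Q}_{e,y,z}$ for $e=(a,b)$ via a two-case analysis — one case when $\operatorname{LCA}(y,z)$ is $y$ or $z$, another when it is a third vertex $r$ — in each case verifying five LCA identities that place both $a$ and $b$ on the appropriate rootward segment of the $y$--$z$ path. You replace all of that with the single cut-based criterion: $e$ lies on the $y$--$z$ path if and only if exactly one of $y,z$ belongs to the subtree hanging below $e$, i.e., the subtree rooted at $\mathrm{child}(e)$; this needs only descendant tests, each an $O(1)$ interval containment on DFS entry/exit times, and its correctness is immediate from the fact that deleting a tree edge separates exactly the child's subtree from the rest. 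What your version buys is simplicity: no case analysis, no LCA machinery beyond a DFS numbering, and a characterization that is obviously exhaustive. What the paper's version buys is only that it reuses verbatim the LCA structure already built for \cref{lemma: mstUpdate}, so no additional annotations such as $\mathrm{child}(e)$ or the entry/exit tables are introduced. Both arguments yield the claimed $O(n\log n)$ preprocessing time (dominated by the MST computation), $O(n)$ space, and $O(1)$ query time.
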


\begin{proof}
As with \cref{theorem: main result},
we compute $T=\operatorname{MST}(P)$ in $O(n\log n)$ time and $O(n)$ space
and then we root $T$ at an arbitrary vertex.
We then preprocess $T$ in $O(n)$ 
time and space, like we did in \cref{sec: algorithm}, 
enabling us to perform $O(1)$-time LCA queries
\cite{DBLP:conf/latin/BenderF00,DBLP:journals/siamcomp/HarelT84,DBLP:journals/siamcomp/SchieberV88}.

In a tree, there is a unique path between two vertices.
LCA queries help us answer $\mathcal{Q}_{e,y,z}$ as follows.
We have two cases to consider. 
In the first case, one of the two query vertices is an ancestor
of the other in the rooted tree $T$; 
and in the other case, $\operatorname{LCA}(y,z) = r$ (i.e., the LCA of $y$ and $z$ is a third vertex $r$).
Let us consider the first case.
Without loss of generality, let $z$ be an ancestor of $y$ 
(in which case $\operatorname{LCA}(y,z) = z$).
To traverse $T$ from $y$ to $z$, we follow parent-pointers
until we reach $z$.
If $e=(a,b)$ is on the path between $y$ and $z$ in $T$, then it
would be seen when following the parent-pointers.
This means that,
in this case,
$e$ is on the path between $y$ and $z$
if and only if
$\operatorname{LCA}(y,z) = z$,
$\operatorname{LCA}(y,a) = a$, 
$\operatorname{LCA}(y,b) = b$,
$\operatorname{LCA}(a,z) = z$, and
$\operatorname{LCA}(b,z) = z$.

In the second case, $\operatorname{LCA}(y,z) = r$.
If $e=(a,b)$ is on the path between $y$ and $z$ in $T$, then it
would be seen either when following the parent-pointers from 
$y$ to $r$, or when following them from $z$ to $r$.
Without loss of generality, assume $e$ is seen when
traversing the parent-pointers from $z$ to $r$.
This means that,
in this case,
$e$ is on the path between $y$ and $z$
if and only if
$\operatorname{LCA}(y,z) = r$,
$\operatorname{LCA}(y,a) = r$, 
$\operatorname{LCA}(y,b) = r$,
$\operatorname{LCA}(a,z) = a$, and
$\operatorname{LCA}(b,z) = b$.
\end{proof}

The approaches of \cref{lemma: mstUpdate,lemma: edge on path query} 
using LCA queries apply both in our setting as well as
to the results 
of Brazil et al.\ \cite{brazil2015generalised},
reducing the time and space of steps
\ref{k st longest edge on path between points} and \ref{k st table if edge is on path between points}
from $O(n^2)$ for $k=1$ and $O(n^3)$ for $k>1$ to
$O(n\log n)$ time and $O(n)$ space for all $k$, thus also reducing
their overall space usage to the $O(n^2)$ space used
to build the overlaid OVD.
Making the appropriate substitutions in their analysis we get 
the following corollary for our scenario.

\begin{corollary}
\label{corollary: k steiner points}
Given a set $P$ of $n$ points in the Euclidean plane $\mathbb{R}^2$, 
a constant integer $k>1$,
and a set of $j$ input lines $\Gamma=\{\gamma_1, \ldots \gamma_j\}$,
under \cref{rest: our four} the modified MStT algorithm of 
Brazil \etal~\cite{brazil2015generalised}
solves the restricted $k$-Steiner tree problem in $O((jn)^k)$ time
and $O(jn)$ space
with a Steiner set $S$ of at most $k$ points from  $\bigcup_{i=1}^{j} \gamma_i$.
\end{corollary}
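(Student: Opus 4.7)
The plan is to treat Corollary \ref{corollary: k steiner points} as the end-point of the step-by-step modification of the Brazil \etal\ algorithm already spelled out in Section \ref{sec: k st points}, and to write the proof by assembling those modifications into a single time/space accounting. First I would carry out the preprocessing: compute $T=\operatorname{MST}(P)$ together with the longest-edge auxiliary tree $T'$ of Lemma \ref{lemma: mstUpdate} and the LCA-based structure of Lemma \ref{lemma: edge on path query} in $O(n\log n)$ time and $O(n)$ space, replacing the $O(n^2)$/$O(n^3)$ preprocessing that Brazil \etal\ use for Steps \ref{k st longest edge on path between points} and \ref{k st table if edge is on path between points}. Then, exactly as in Algorithm \ref{alg: algo}, for each $\gamma_i\in\Gamma$ compute the six OVDs of $P$ and intersect their $O(n)$ bounding rays and segments with $\gamma_i$, producing a sorted sequence of $O(n)$ labelled intervals per line, each carrying its constant-sized set of candidate MST-neighbours $C_P(I)$. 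Across all $j$ lines this yields $O(jn)$ labelled intervals in $O(jn\log n)$ time and $O(jn)$ space, replacing the overlaid OVD built in Step \ref{k st compute OVDs}.

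Next I would run the main loop of the adapted algorithm, iterating over all selections with repetition of $k'\le k$ intervals chosen from the $O(jn)$ labelled intervals. The same combinations-with-repetition bound invoked in the bullet list of Section \ref{sec: k st points} (via Stirling), applied with $O(jn)$ in place of $O(n^2)$, gives $O((jn)^k)$ iterations. Inside each iteration, I associate a free Steiner point $s_i$ with its chosen interval $I_i$, build the supergraph on $\bigcup_i C_P(I_i)\cup\{s_1,\ldots,s_{k'}\}$, enumerate its $O(1)$ viable subforests, invoke Restriction \ref{rest: our four} (which is the line-constrained analogue of Brazil \etal's Restriction \ref{rest: four} and so gives an $O(1)$-time fixed-topology solver where each Steiner point is constrained to its specified $\gamma_i$), and then compute the minimum $F$-fixed spanning tree length in $O(1)$ using LCA queries on $T'$ together with the $\mathcal{Q}_{e,y,z}$-query data structure. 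Each iteration therefore costs $O(1)$ time and space, so the main loop contributes $O((jn)^k)$ time. Since $k\ge 2$, this dominates the $O(jn\log n)$ preprocessing, and the working space is dominated by the $O(jn)$ intervals and the $O(n)$ MST/LCA structures, giving the claimed $O(jn)$ space bound.

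The genuine substance to check, and the main obstacle, is correctness: I must verify that the interval-based ``viable internal topology'' enumeration really does contain every optimal restricted $k$-Steiner tree. By the general-position assumption on $\gamma$-OVD intersections, each potential Steiner point on $\bigcup_i\gamma_i$ lies in exactly one labelled interval, and by the OVD construction of Section \ref{sec: tools} together with Observation \ref{obs: msts}, for any $u\in I$ the set of possible neighbours of $u$ in an MST is contained in $C_P(I)$. Hence every candidate solution topology is generated by some choice of $\le k$ intervals and some viable subforest on the resulting supergraph, matching the correctness argument of Brazil \etal\ with the overlaid OVD replaced by our interval decomposition. The remaining issue is to confirm that their degree-six bound and their viable-subforest counting (the $O(1)$ factor from Cayley's formula cited in Section \ref{sec: k st points}) carry over unchanged in the multi-line setting, which they do since both depend only on the fact that each interval contributes a constant-sized candidate neighbour set. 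Combining these observations finishes the proof.
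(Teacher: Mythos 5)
Your proposal is correct and follows essentially the same route as the paper: the paper's justification for this corollary is precisely the preceding walkthrough of the Brazil~\etal\ algorithm, with the overlaid OVD replaced by the $O(jn)$ labelled intervals pooled over all $j$ lines, the combinations-with-repetition count redone with $O(jn)$ in place of $O(n^2)$ to get $O((jn)^k)$ iterations, Restriction~\ref{rest: our four} supplying the $O(1)$ fixed-topology solver, and Lemmas~\ref{lemma: mstUpdate} and~\ref{lemma: edge on path query} replacing the $O(n^2)$/$O(n^3)$ preprocessing. Your additional attention to why the interval-based enumeration still covers every optimal topology (and why the lines cannot be handled independently as in Corollary~\ref{corollary: j lines}) matches the paper's implicit reasoning.
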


Brazil \etal~in \cite{brazil2015generalised}
actually study what they call the \emph{generalized $k$-Steiner tree} 
problem, presenting the algorithm outlined in this section
for the $k$-Steiner tree problem
allowing norms other than Euclidean and allowing different cost functions 
for the weight of a tree.\footnote{For the reader trying to get
a better understanding of the
material from Brazil \etal~\cite{brazil2015generalised}, 
the same material is presented a bit differently 
in the survey by Brazil and Zachariasen 
\cite{brazilSteinerSurveyBook}.}
Following the approach in \cite{brazil2015generalised}, our restricted
Steiner problem can also be solved in other norms and cost functions.
To achieve the runtime and space bounds of \cref{theorem: main result}, 
\cref{alg: algo} takes advantage of the fact that:
in the Euclidean norm, a vertex in $\operatorname{MST}(P)$ 
has a constant maximum degree; 
the plane can be partitioned the same way into a constant number of regions
around any point $u \in P$ and in each region the number of potential 
neighbours of $u$ in the MST is constant;
and the OVDs and MST can be computed in $O(n\log n)$ time and $O(n)$ space.

The results of Brazil \etal~\cite{brazil2015generalised} allow us to 
keep the same bounds as \cref{theorem: main result} and
\cref{corollary: k steiner points} with different norms
and cost functions.
The fact that the algorithm from \cref{corollary: k steiner points}
works for different norms and cost functions
(subject to the restrictions presented below) follows
from Brazil \etal~\cite{brazil2015generalised},
the fact that computing the intervals along the
lines in $\Gamma$ created by the OVDs does not depend on either the norm or tree
cost function, the fact that we assume that the weight of an edge can be calculated
in constant time, and the fact that finding the weight of
a solution tree can still be done in $O(n)$ time.
Below we show that the bounds on \cref{alg: algo}
do not change under the same set of norms and cost functions
considered by Brazil \etal

\subsection{Other Norms}
\label{sec: norms}
We begin with some definitions and notation 
from Brazil \etal \cite{brazil2015generalised}.
Let $\Vert \cdot \Vert$ be a given norm on $\mathbb{R}^2$, i.e., 
a function $\Vert \cdot \Vert : \mathbb{R}^2 \rightarrow \mathbb{R}$
that satisfies $\Vert x \Vert \geq 0$ for all $x \in \mathbb{R}^2$,
$\Vert x \Vert = 0$ if and only if $x=0$, 
$\Vert r x \Vert = |r| \cdot \Vert x \Vert$ for $r \in \mathbb{R}$ 
with $|\cdot|$ the absolute value function,
and $\Vert x + y \Vert \leq \Vert x \Vert + \Vert y \Vert$ for all
$x,y \in \mathbb{R}^2$.
The unit ball $B = \{x:\Vert x \Vert \leq 1\}$ is a centrally symmetric 
convex set.
Let $\partial(B)$ denote the boundary of $B$.

We now alter our definition of 
$\operatorname{MST}(\cdot)$ to use the given norm 
$\Vert \cdot \Vert$ rather than the Euclidean norm
when computing the length of the tree.
To ensure the MST and required OVDs 
can still be constructed in $O(n\log n)$ time and $O(n)$ space 
under the different norms, and that the maximum degree of a vertex in 
the MST is still a constant, Brazil \etal~impose some restrictions
\cite{brazil2015generalised}.
We also assume that computing the distance between two points in the given
norm takes constant time.

\begin{myRest}[Brazil \etal $2015$ \cite{brazil2015generalised}]
\label{rest:one}
The intersection points of any two translated copies of $\partial(B)$,
and the intersection points of any straight line and $\partial(B)$, can
be calculated to within any fixed precision in constant time.
\end{myRest}

\begin{lemma}[Brazil \etal $2015$ \cite{brazil2015generalised}, Lemma $3$]
\label{lemma: ball cut into six}
There exist six points $\{y_i : i = 0, . . . , 5\}$ 
on $\partial(B)$ such that for any pair
of rotationally consecutive ones, say $y_i , y_j$, 
we have $\Vert y_i - y_j \Vert = 1$.
Moreover, these six points are constructable.
\end{lemma}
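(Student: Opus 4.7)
The plan is to construct the six points iteratively using the central symmetry of $B$ together with the intermediate value theorem (IVT). Pick $y_0 \in \partial(B)$ arbitrarily and set $y_3 := -y_0$, which lies on $\partial(B)$ by central symmetry and satisfies $\|y_0 - y_3\| = \|2y_0\| = 2$. Let $A$ be one of the two arcs of $\partial(B)$ joining $y_0$ and $y_3$, parameterized continuously by $p\colon [0,1] \to A$ with $p(0) = y_0$ and $p(1) = y_3$. The goal is to locate two points $y_1, y_2 \in A$, appearing in the order $y_0, y_1, y_2, y_3$ along $A$, with $\|y_0 - y_1\| = \|y_1 - y_2\| = \|y_2 - y_3\| = 1$, and then to define $y_4 := -y_1$ and $y_5 := -y_2$ on the opposite arc by central symmetry.

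Existence of $y_1$ alone is immediate: the continuous function $t \mapsto \|p(t) - y_0\|$ takes values $0$ and $2$ at $t=0$ and $t=1$, so by IVT it attains the value $1$ at some $t_1 \in (0,1)$. The main obstacle is forcing the middle equality $\|y_1 - y_2\| = 1$ to hold simultaneously with the other two. I would do this by promoting the two independent IVT choices into a single one-parameter family: for each $t$ with $\|p(t) - y_0\| = 1$, let $q(t)$ be the first point further along $A$ (toward $y_3$) with $\|q(t) - p(t)\| = 1$. Such a $q(t)$ exists by another IVT, since the distance from $p(t)$ along $A$ toward $y_3$ ranges from $0$ up to $\|y_3 - p(t)\| \geq \|y_3 - y_0\| - \|p(t) - y_0\| = 1$ by the reverse triangle inequality. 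Consider the continuous ``defect'' function $\Delta(t) := \|q(t) - y_3\| - 1$. A central-symmetry argument, where swapping the roles of $y_0$ and $y_3$ and traversing $A$ in the reverse direction produces an analogously defined defect of the opposite sign, forces $\Delta$ to vanish at some $t^*$. Setting $y_1 := p(t^*)$ and $y_2 := q(t^*)$ gives the desired triple of distances.

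With $y_1, y_2$ in hand, central symmetry yields the remaining three distance conditions automatically:
\[
\|y_3 - y_4\| = \|y_1 - y_0\| = 1, \quad \|y_4 - y_5\| = \|y_2 - y_1\| = 1, \quad \|y_5 - y_0\| = \|y_2 - y_3\| = 1,
\]
so the six points lie in rotational order on $\partial(B)$ with all consecutive distances equal to $1$. Constructibility then follows from Restriction~\ref{rest:one}: each of $y_1, y_2$ is an intersection point of $\partial(B)$ with an appropriate translate of $\partial(B)$ (e.g., $y_1 \in \partial(B) \cap (y_0 + \partial(B))$), computable to any fixed precision in constant time, and the antipodes $y_3, y_4, y_5$ are obtained by negation. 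The principal technical difficulty throughout is the sign-change analysis for $\Delta$: the central symmetry must be invoked delicately, and edge cases where the level set $\{t : \|p(t) - y_0\| = 1\}$ is an interval rather than an isolated point (as can happen when $\partial(B)$ contains parallel line segments) need separate attention.
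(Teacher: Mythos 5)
The paper does not actually prove this lemma---it is imported verbatim as Lemma~3 of Brazil \etal~\cite{brazil2015generalised}---so I am judging your argument against the standard proof of that result. Your overall skeleton (fix $y_0$, set $y_3=-y_0$, find two intermediate points on one arc, reflect through the origin, and get constructibility from \cref{rest:one} via $y_1\in\partial(B)\cap(y_0+\partial(B))$) is sound, and the first IVT step producing $y_1$ with $\Vert y_1-y_0\Vert=1$ is correct. But the step you yourself flag as the ``principal technical difficulty'' is a genuine gap, not a detail: your defect function $\Delta$ is defined only on the level set $S=\{t:\Vert p(t)-y_0\Vert=1\}$, and for any strictly convex norm (including the Euclidean one) $S$ is a single point, so there is no one-parameter family over which $\Delta$ could change sign and the IVT gives you nothing. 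Even when $S$ is a nondegenerate interval (flat spots on $\partial(B)$), you have established neither the continuity of $t\mapsto q(t)$ (the ``first point at distance $1$'' can jump) nor the opposite signs of $\Delta$ at the two ends of $S$; the appeal to ``a central-symmetry argument'' is an assertion, not a proof. As written, the middle equality $\Vert y_1-y_2\Vert=1$ is never actually forced.

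The missing idea is that no second topological argument is needed: the linear structure of the norm hands you $y_2$ for free. Having found $y_1\in\partial(B)$ with $\Vert y_1-y_0\Vert=1$, set $y_2:=y_1-y_0$. Then $\Vert y_2\Vert=\Vert y_1-y_0\Vert=1$, so $y_2\in\partial(B)$; moreover $y_2-y_1=-y_0$ gives $\Vert y_2-y_1\Vert=1$ and $y_2-y_3=(y_1-y_0)+y_0=y_1$ gives $\Vert y_2-y_3\Vert=1$. The six points are then $\pm y_0,\pm y_1,\pm(y_1-y_0)$, an affinely regular hexagon inscribed in $\partial(B)$; convexity of $B$ puts them in rotational order, and your reflection identities finish the remaining three distances exactly as you wrote them. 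Constructibility is as you said: $y_1$ is an intersection point of $\partial(B)$ with a translate of $\partial(B)$, computable in constant time under \cref{rest:one}, and the rest are obtained by subtraction and negation. I recommend replacing the $\Delta$-argument with this observation; the rest of your write-up then goes through.
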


\begin{definition}[Brazil \etal $2015$ \cite{brazil2015generalised}]
\label{def: cones}
For two directions $\phi_i$ and $\phi_j$ in the plane,
\textbf{\emph{$\mathbb{K}_y(\phi_i , \phi_j)$}} 
denotes the cone defined to be the set 
consisting of all rays emanating from $y$ in direction $\phi$, for
$\phi_i \leq \phi \leq \phi_j$.
For each $y_i$ from \cref{lemma: ball cut into six} let $\theta_i$
be the direction of the ray starting at the center of $B$ and going through
$y_i$.
We assume that the $\{\theta_i\}$ are ordered in a counterclockwise manner,
and two consecutive directions will be denoted by $\theta_i$ and 
$\theta_{i+1}$ (i.e., the mod $6$ notation will be omitted).
\end{definition}

\begin{lemma}[Brazil \etal $2015$ \cite{brazil2015generalised}, Lemma $5$]
\label{lemma: mst six neighbours}
Let $y$ be any point in the plane. 
Then there exists a $T = \operatorname{MST}(P \cup \{y\})$ with the 
following property: 
for each $i = 0, \ldots , 5$ there is at most one
point of $P$ that is adjacent to $y$ in $T$ and lies in cone 
$\mathbb{K}_y(\theta_i , \theta_{i+1})$, and this point is
a closest terminal to $y$ in the cone.
\end{lemma}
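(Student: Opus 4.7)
The plan is a two-stage exchange argument on an arbitrary minimum spanning tree of $P \cup \{y\}$. The goal is to repeatedly swap edges incident to $y$ so that (i) each cone contains at most one neighbor of $y$ and (ii) that neighbor is the closest terminal of $P$ in the cone, without ever increasing the tree weight.

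Let $T$ be any MST of $P \cup \{y\}$. For stage one, suppose there are two neighbors $p, q$ of $y$ in $T$ that both lie in the same cone $\mathbb{K}_y(\theta_i, \theta_{i+1})$, with $\|y-p\| \leq \|y-q\|$. The crux is to establish the norm-inequality
\[
\|p - q\| \;\leq\; \|y - q\|.
\]
Granted this, removing the edge $(y,q)$ splits $T$ into two components, one containing $y$ and $p$, the other containing $q$; adding the edge $(p,q)$ reconnects them and produces a spanning tree of weight no greater than $|T|$. Hence the resulting tree is also an MST, and it has one fewer neighbor of $y$ in $\mathbb{K}_y(\theta_i, \theta_{i+1})$. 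Iterating over all $i$ yields an MST in which each of the six cones contains at most one neighbor of $y$.

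The main obstacle is the inequality $\|p-q\| \leq \|y-q\|$ for the general norm $\|\cdot\|$. This is where \cref{lemma: ball cut into six} is essential: by construction, the six boundary points $y_0,\dots,y_5$ of $\partial(B)$ satisfy $\|y_i - y_{i+1}\| = 1$, so the cone $\mathbb{K}_y(\theta_i,\theta_{i+1})$ is exactly wide enough that the scaled ball $q + \|y-q\|\cdot B$ — which contains $y$ on its boundary — also contains every point of the cone $\mathbb{K}_y(\theta_i,\theta_{i+1})$ that is at distance at most $\|y-q\|$ from $y$. Since $p$ lies in that cone and satisfies $\|y-p\| \leq \|y-q\|$, we get $p \in q + \|y-q\|\cdot B$, i.e.\ $\|p-q\|\leq\|y-q\|$. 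The argument uses central symmetry and convexity of $B$ together with the unit-gap property of consecutive $y_i$'s; the six-point choice is exactly calibrated so that the triangle with one vertex at $y$ and two vertices in a single cone cannot have $\|p-q\|$ strictly larger than the longer of the two $y$-incident sides.

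For stage two, fix $i$ and suppose the unique neighbor $p$ of $y$ in $\mathbb{K}_y(\theta_i, \theta_{i+1})$ is not a closest terminal of $P$ in that cone; let $p^*$ be a closest one, so $\|y-p^*\| \leq \|y-p\|$. Apply the same norm-inequality (with $q:=p$ and the role of the closer point played by $p^*$) to obtain $\|p - p^*\| \leq \|y - p\|$. Removing $(y,p)$ and inserting either $(y,p^*)$ or $(p^*,p)$ — whichever reconnects the two components of $T \setminus \{(y,p)\}$ — yields a spanning tree of weight no larger than $|T|$, hence still an MST. After this swap the neighbor of $y$ in cone $i$ is a closest terminal in that cone. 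Performing stage one followed by stage two for each $i=0,\ldots,5$ produces the required MST and completes the proof.
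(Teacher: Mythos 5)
This lemma is stated in the paper as an imported result (Lemma~5 of Brazil \etal~\cite{brazil2015generalised}) and is not proved here, so there is no in-paper argument to compare against; I can only assess your proof on its own terms. Your overall architecture --- a weight-non-increasing exchange argument, stage one strictly decreasing $\deg(y)$ until each cone holds at most one neighbour, stage two replacing that neighbour by a closest terminal $p^*$ with the correct case analysis on which component of $T\setminus\{(y,p)\}$ contains $p^*$ --- is sound and is the standard route to this statement. The reconnection checks are right ($p$ stays in $y$'s component after deleting $(y,q)$ because the edge $(y,p)$ survives), each swap preserves minimality, and neither stage introduces new neighbours of $y$ in other cones, so processing the six cones in sequence terminates.

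The one genuine gap is that the entire geometric content of the lemma is concentrated in the inequality $\Vert p-q\Vert \leq \Vert y-q\Vert$, and you assert it rather than prove it. The ball-containment claim you invoke --- that $q+\Vert y-q\Vert\cdot B$ contains every point of the cone within distance $\Vert y-q\Vert$ of $y$ --- is exactly equivalent to the inequality you are trying to establish, so appealing to it is circular as written; ``central symmetry and convexity of $B$ together with the unit-gap property'' does not by itself yield it (convexity of $B$ controls chords of $B$, not distances between points of a translated sector). The missing ingredient is the Monotonicity Lemma for normed planes: as $z$ traverses $\partial(B)$ from a fixed boundary point $u$ toward $-u$, the value $\Vert u-z\Vert$ is non-decreasing. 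From this one gets that any two points $u,v$ on the arc of $\partial(B)$ between the consecutive points $y_i,y_{i+1}$ of \cref{lemma: ball cut into six} satisfy $\Vert u-v\Vert\leq\Vert y_i-y_{i+1}\Vert=1$, and then the general case follows by a short computation: writing $p=\alpha u$, $q=rv$ with $u,v$ on that arc and $\alpha\leq r$,
\begin{equation*}
\Vert p-q\Vert=\Vert \alpha(u-v)+(\alpha-r)v\Vert\leq \alpha\Vert u-v\Vert+(r-\alpha)\leq r=\Vert y-q\Vert.
\end{equation*}
With that step supplied (it is the content of the lemma preceding this one in Brazil \etal), your argument is complete.
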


We conclude from \cref{lemma: mst six neighbours} that six OVDs
still suffice for \cref{alg: algo}.
The computation of the OVDs in \cref{alg: algo} must be modified
so that the defining cone for $\operatorname{OVD}_i$ is
$\mathbb{K}(\theta_i , \theta_{i+1})$ for $0 \leq i \leq 5$.

\begin{definition}[Chew and Drysdale $1985$ \cite{DBLP:conf/compgeom/ChewD85}, Brazil and Zachariasen $2015$ \cite{brazilSteinerSurveyBook}]
\label{def: abstract voronoi diagram}
Any closed convex curve $C$ bounding a region containing the origin $o$
defines a generalized Voronoi diagram \cite{DBLP:conf/compgeom/ChewD85}.\footnote{We can even allow $o$ to lie on $C$, but this means some points
are infinitely far from $o$ \cite[\textsection $4.3.1$ pg.\ $258$]{brazilSteinerSurveyBook}.}
This curve $C$ represents the boundary of a unit ball.
We then get a distance function $\delta_C$ (defined by $C$)
in the plane where the distance of $o$ to a point $v$ is $|ov|/|ov_o|$ where
$v_o$ is the point at which the ray starting at $o$ going through $v$
intersects $C$ and $|\cdot|$ is the Euclidean norm 
giving us the Euclidean distance between the two points.
\end{definition}

\begin{theorem}[Chew and Drysdale $1985$ \cite{DBLP:conf/compgeom/ChewD85}, Brazil and Zachariasen $2015$ \cite{brazilSteinerSurveyBook}]
\label{thm: abstract VDs}
The Voronoi diagram of $n$ points based on a closed convex shape
$C$ can be constructed in $O(n \log n)$ time and $O(n)$ 
space as long as the following operations can be performed in constant time:
\begin{enumerate}
\item Given two points $a$ and $b$, compute the bisector curve between them 
(i.e., the set of points equidistant from $a$ and $b$ using the $\delta_C$
distance function).
\item Given two such bisectors, compute their intersection(s).
\end{enumerate}
\end{theorem}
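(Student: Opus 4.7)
The plan is to follow the divide-and-conquer scheme of Chew and Drysdale, using the two constant-time primitives as the only metric-specific operations and showing that everything else is combinatorial in nature. I would first establish the structural facts that make linear complexity and linear-time merging possible: because $C$ is a closed convex curve containing $o$, the ``disk'' of radius $r$ around any point $p$ (namely $p + rC$) is a translated, scaled copy of $C$ and hence convex, so for any two sites $a,b$ the locus $\{x : \delta_C(x,a) = \delta_C(x,b)\}$ is an unbounded simple curve that separates the plane into two topological halves, and the Voronoi region of $a$ (the intersection of half-planes-in-bisector-sense with $a$) is star-shaped with respect to $a$. From these properties one gets the standard Euler-style argument that the diagram on $n$ sites has $O(n)$ vertices, edges, and faces, justifying the $O(n)$ space bound.

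Next I would sort the $n$ sites by $x$-coordinate once, in $O(n\log n)$ time, and run a recursive procedure $\mathrm{VD}(S)$ that splits $S$ by the median $x$-coordinate into $L$ and $R$ of size roughly $n/2$, recursively computes $\mathrm{VD}(L)$ and $\mathrm{VD}(R)$, and then merges them in time $O(|S|)$. The recurrence $T(n) = 2T(n/2) + O(n)$ then solves to $T(n) = O(n\log n)$, giving the claimed time bound; the space bound follows since each recursive level stores only a linear-size diagram and the recursion depth is $O(\log n)$, and intermediate structures can be discarded after merging.

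The heart of the proof, and the main obstacle, is the merge step: constructing the \emph{dividing contour} $\Sigma$ whose edges are arcs of bisectors $B(\ell,r)$ with $\ell \in L$ and $r \in R$, after which cells of $\mathrm{VD}(L)$ lying on the $R$-side of $\Sigma$ and cells of $\mathrm{VD}(R)$ lying on the $L$-side of $\Sigma$ are clipped away. To do this in $O(n)$ time I would argue that $\Sigma$ behaves essentially as in the Euclidean case: it is a simple curve (possibly together with finitely many bounded components, which the convexity of $C$ in fact precludes via the star-shapedness of regions) that can be traced from ``top to bottom'' by a walk. The walk starts at an unbounded edge identified from the upper hulls of $L$ and $R$, alternately extending along the current bisector $B(\ell,r)$ until it first hits either an edge of $\mathrm{VD}(L)$ (at which point $\ell$ is updated to the neighbouring site in $L$) or an edge of $\mathrm{VD}(R)$ (at which point $r$ is updated), using the constant-time bisector and bisector-bisector intersection primitives to decide which event occurs first. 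The crux to verify is that each edge of $\mathrm{VD}(L)$ and $\mathrm{VD}(R)$ is touched by the walk only $O(1)$ times, which follows from star-shapedness plus the fact that two bisectors in a convex distance function meet in $O(1)$ points (a consequence of the convexity of $C$ together with the first primitive). Granting this, the walk performs $O(n)$ primitive operations and therefore $O(n)$ work, completing the merge and hence the inductive step.
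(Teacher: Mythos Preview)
The paper does not actually prove this theorem: it is stated as a cited result from Chew and Drysdale (1985) and the survey by Brazil and Zachariasen (2015), and is used as a black box to justify that the required Voronoi diagrams and OVDs can be built in $O(n\log n)$ time and $O(n)$ space. So there is no ``paper's own proof'' to compare against.

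That said, your sketch follows exactly the route of the original Chew--Drysdale paper: a one-time sort, a divide-and-conquer on the median $x$-coordinate, and a linear-time merge by tracing the dividing contour using the two constant-time primitives. This is the standard argument the citation points to, and your outline captures its essential pieces (convexity of scaled translates of $C$, star-shapedness of cells, $O(n)$ combinatorial complexity via Euler's formula, and the bounded number of intersections between bisectors guaranteeing the walk is linear). One point to be careful about if you flesh this out: for general convex distance functions the dividing contour need not be $y$-monotone the way it is in the Euclidean case, so the ``top to bottom'' walk requires a more careful amortized argument that each edge of the two subdiagrams is visited $O(1)$ times; Chew and Drysdale handle this, but it is the one place where the Euclidean intuition can mislead.
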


\begin{myRest}[Brazil \etal $2015$ \cite{brazil2015generalised}]
\label{rest:two}
Let $C'$ be any sector of $B$. 
Then, given any two points $a$ and $b$, we can compute the bisector curve 
between them (i.e., the set of points equidistant from $a$ and $b$ using the 
$\delta_{C'}$ distance function), and, given two such
bisectors, we can compute their intersection.
Moreover, these operations can be performed to within any fixed precision in 
constant time.
\end{myRest}

We make another general position assumption on the input points, 
namely that the bisector
of two points under the given norm does not contain regions.
Since \cref{alg: algo} assumes each OVD is an arrangement of lines,
we also require the next restriction.

\begin{myRest}[Brazil \etal $2015$ \cite{brazil2015generalised}]
\label{rest:three}
The shape of $B$ implies that the $i$\textsuperscript{th} OVD 
partition of any set of points is piecewise linear.
\end{myRest}

The first few steps of \cref{alg: algo} are:
\begin{enumerate}
\item compute $T=\operatorname{MST}(P)$
\item build the longest-edge auxiliary tree on $T$
\item compute the constant number of OVDs as labelled linear arrangements
\item compute the intervals of interest along $\gamma$
\end{enumerate} 

Norms that comply with \cref{rest:one,rest:two}
allow us to apply \cref{thm: abstract VDs} and compute
$T$ and the six OVDs in $O(n \log n)$ time and $O(n)$ space. 
The size of these OVDs is $O(n)$ 
\cite{DBLP:conf/compgeom/ChewD85}.
Once we have $T$, building the auxiliary tree needs no special care.
Norms that also comply with \cref{rest:three} produce OVDs
that are linear arrangements, so intersecting a line $\gamma$ 
with these piecewise-linear OVDs still creates $O(n)$ 
intervals and we can still compute them in
the time and space it takes to compute and sort the intersection points, i.e., $O(n\log n)$ time and $O(n)$ space.
A few examples of norms that satisfy \cref{rest:one,rest:two,rest:three} are
Euclidean, $L_1$, and $L_{\infty}$ \cite{brazil2015generalised}. 

The next step is computing candidate solution points in each interval
using our input norm and its distance function.
The time to compute optimal Steiner points for an interval
is still proportional to both: finding the roots of the derivative of 
the sum-of-distances function mentioned in \cref{alg: algo};
and computing the distance
between a terminal and a point on $\gamma$. 
For norms that respect our restrictions, this computation
takes $O(1)$ time and space when $\gamma$ is a line.  
As mentioned, for our compliant norms, a Steiner point will have at most
one neighbour in each OVD cone.
However, when computing the Steiner point for an interval
we must now consider neighbour subsets of sizes three to six  
points (of which there are still $O(1)$ such subsets).
The LCA queries are not affected by different norms.
Thus, we have the following corollary where the weight of a tree
is the sum of its edge-weights, which are equal to their length under
our given norm.

\begin{restatable}{corollary}{jNormCurvesCorollary}
\label{corollary: j norm curves}
Given:
\begin{itemize}
    \item a set $P$ of $n$ points in $\mathbb{R}^2$;
    \item a norm $\Vert \cdot \Vert$
that is compliant to Restrictions \ref{rest:one}, \ref{rest:two}, 
and \ref{rest:three};
    \item $j$ lines $\Gamma = \{\gamma_1, \ldots \gamma_j\}$
\end{itemize}
By running \cref{alg: algo} for each $\gamma \in \Gamma$,
in $O(jn\log n)$ time and $O(n + j)$ space 
a minimum-weight tree is computed that  connects all points in $P$ 
using at most one extra point $s \in \bigcup_{i=1}^{j} \gamma_i$.

By running the algorithm of \cref{sec: k st points} for a constant integer $k>1$
under Restriction \ref{rest: our four},
the restricted $k$-Steiner tree problem is solved in 
$O((jn)^k)$ time and $O(jn)$ space 
with a Steiner set $S$ of at most $k$ points from  $\bigcup_{i=1}^{j} \gamma_i$.
\end{restatable}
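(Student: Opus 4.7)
The plan is to verify that every step of \cref{alg: algo} retains its $O(n\log n)$-time and $O(n)$-space bounds when Euclidean distance is replaced by a norm $\Vert\cdot\Vert$ satisfying \cref{rest:one,rest:two,rest:three}, and then to argue that running the algorithm once per input line preserves the claimed aggregate bounds. \cref{lemma: ball cut into six,lemma: mst six neighbours} replace the Euclidean degree-six argument, so six OVDs still suffice and their defining cones are $\mathbb{K}(\theta_i,\theta_{i+1})$ for $0\le i\le 5$. By \cref{rest:one,rest:two} combined with \cref{thm: abstract VDs}, each OVD is constructed in $O(n\log n)$ time and $O(n)$ space, and \cref{rest:three} guarantees each OVD is piecewise linear, so its intersection with a line $\gamma$ produces $O(n)$ intersection points that can be sorted in $O(n\log n)$ time and $O(n)$ space to form the labelled intervals. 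The MST under $\Vert\cdot\Vert$ is also computed in $O(n\log n)$ time and $O(n)$ space, while the longest-edge auxiliary tree and the LCA machinery of \cref{lemma: mstUpdate} are purely topological and carry over unchanged.

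Next I would re-examine the per-interval optimization. The function $d_{\mathbb{P}}(x)$ now uses $\Vert\cdot\Vert$ in place of Euclidean distance, but norm-distance evaluation is constant-time by hypothesis, and differentiation and root-finding can still be performed to within any fixed precision in constant time. Since a Steiner point has at most one neighbour in each of the six cones, I would enumerate the constant number of subsets of size three through six of the at most six labels of each interval, compute a constant-sized list of candidate minimizers per subset, and evaluate the savings $\Delta_u-\sigma_u$ via $O(1)$ LCA queries exactly as in \cref{theorem: main result}; \cref{lemma: deg 3 connected,lemma: deg 4 connected} together with the analogous constant-time arguments for degrees five and six ensure the overlay decomposes into a tree. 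Running this procedure once per $\gamma_i$ while retaining only the best overall solution gives total time $O(jn\log n)$; the per-iteration workspace is $O(n)$ and is overwritten between iterations, and storing the $j$ input lines costs $O(j)$, for total space $O(n+j)$.

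For the $k>1$ statement, the plan is to invoke the modified Brazil \etal\ algorithm of \cref{sec: k st points}, whose analysis in \cref{corollary: k steiner points} already yields $O((jn)^k)$ time and $O(jn)$ space in the Euclidean setting. The interval computation, the LCA-based queries of \cref{lemma: mstUpdate,lemma: edge on path query}, and the enumeration over $k'\le k$ intervals and viable subforests are independent of the choice of norm, so their complexity carries over verbatim under \cref{rest:one,rest:two,rest:three}, and \cref{rest: our four} absorbs the non-Euclidean cost of each fixed-topology subproblem into an $O(1)$ black box. The main obstacle, or at least the step requiring the most care, is to confirm that the sum-of-distances optimization and the fixed-topology subproblem really remain constant-time under the norm, so that no hidden dependence on $n$ creeps in through the primitives guaranteed by Restrictions \ref{rest:one}--\ref{rest: our four}; once this is checked, both statements of the corollary follow by the same analyses used in \cref{theorem: main result} and \cref{corollary: k steiner points}.
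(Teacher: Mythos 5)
Your proposal is correct and follows essentially the same route as the paper: \cref{lemma: ball cut into six,lemma: mst six neighbours} to justify six OVDs, \cref{rest:one,rest:two} with \cref{thm: abstract VDs} for the $O(n\log n)$-time OVD and MST construction, \cref{rest:three} for the $O(n)$ piecewise-linear intervals, constant-time per-interval optimization over subsets of size three to six, norm-independent LCA machinery, and a direct appeal to the analysis of \cref{corollary: k steiner points} under \cref{rest: our four} for the $k>1$ case. Your explicit flag that the tree-decomposition argument of \cref{lemma: deg 3 connected,lemma: deg 4 connected} must extend to degrees five and six under non-Euclidean norms is a point the paper leaves implicit (relying on the general form of Brazil et al.'s Theorem~11), so no gap.
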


\subsection{Other Cost Functions}
\label{sec: cost}

Brazil \etal~\cite{brazil2015generalised}
also showed that we can use our results
for different tree cost functions.
We first review some definitions from 
Brazil \etal \cite{brazil2015generalised}.
\begin{definition}[Brazil \etal $2015$ \cite{brazil2015generalised}]
\label{def: edge length vector}
For a MStT $T$ with topology $\tau$ built on $P$ with Steiner set $S$, 
let $\textbf{e}_{\tau,P,S}$ be the vector whose components
are the edge-lengths of $T$ with respect to the given norm
$\Vert \cdot \Vert$.
Let $E(T)$ be the edge set of $T$ with cardinality $|E(T)|$.
\end{definition}
\begin{definition}[Brazil \etal $2015$ \cite{brazil2015generalised}]
\label{def: cost function}
Let the cost function
$\alpha: \mathbb{R}^{|E(T)|}_{+} \rightarrow \mathbb{R}$ be a symmetric
function (i.e., independent of the order of the components in the vector
on which it acts).
The cost of $T$ with respect to $\alpha$ is $\alpha(\textbf{e}_{\tau,P,S})$,
and $\min_{\tau, S}\alpha(\textbf{e}_{\tau,P,S})$ is the minimum cost
of any tree connecting $P$ using $|S|$ other points.
The power-$p$ cost function is $\alpha_p(\textbf{e}_{\tau,P,S}) = \sum_{i=1}^{|E(T)|} \Vert e_i \Vert ^ p $, where $e_i$
is the $i$\textsuperscript{th} element of $E(T)$,
and the cost of the longest edge is $\alpha_{\infty}(\textbf{e}_{\tau,P,S}) = \max_{i=1 \ldots |E(T)|} \Vert e_i \Vert$.
\end{definition}
\begin{definition}[Brazil \etal $2015$ \cite{brazil2015generalised}]
\label{def: l1 opt}
We say that a symmetric function $\alpha=\alpha(\textbf{e}_{\tau,P,S})$ 
is \emph{$\ell_1$-optimizable} if and only if there exist
$\tau^*$ and $S^*$ such that both
$\alpha(\textbf{e}_{\tau^*,P,S^*}) = \min_{\tau, S}\alpha(\textbf{e}_{\tau,P,S})$
and $\alpha_1(\textbf{e}_{\tau^*,P,S^*}) = \min_{\tau}\alpha_1(\textbf{e}_{\tau,P,S^*})$.
In other words, $\alpha$ is $\ell_1$-optimizable if and only if, for any given $P$,
there exists a Steiner set $S^*$ and tree topology interconnecting $P$ and $S^*$ 
that is both: minimum in cost with respect to $\alpha$; 
and is also an MST on $P \cup S^*$ (i.e., minimum with respect to $\alpha_1$).
The functions $\alpha_p$, for $p>0$, and $\alpha_{\infty}$ are 
$\ell_1$-optimizable.
So far in this paper we have been using the $\alpha_1$ cost function
to determine the weight of the MStT.
\end{definition}

\begin{definition}[Brazil \etal $2015$ \cite{brazil2015generalised}]
\label{def: min tree wrt alpha}
Consider a set $P$ of $n$ points in $\mathbb{R}^2$, a norm $||\cdot||$, and
a symmetric $\ell_1$-optimizable function $\alpha$. 
Let $S$ be a set of at most $k$ points in $\mathbb{R}^2$, and let $\mathbb{T}$ be
a spanning tree on $P \cup S$ with topology $\tau$.
Let $S$ and $\tau$ be such that 
$\alpha(\textbf{e}_{\tau,P,S})=\min_{\tau', S'}\alpha(\textbf{e}_{\tau',P,S'})$.
We say such a tree $\mathbb{T}$
is a \emph{minimum $k$-Steiner tree on $P$ \textbf{with respect
to $\alpha$}}.
\end{definition}

\begin{corollary}[Brazil \etal $2015$ \cite{brazil2015generalised}, Corollary $2$]
\label{cor: mst also gen k st tree}
Given a set $P$ of $n$ points in $\mathbb{R}^2$, a norm $||\cdot||$, and
a symmetric $\ell_1$-optimizable function $\alpha$, 
let $\mathbb{T}$ be a minimum $k$-Steiner tree on $P$ with respect
to $\alpha$ and
let $S \subset \mathbb{R}^2$ be the set of at most $k$ Steiner points from $\mathbb{T}$
(i.e., for the vertex set $V(\mathbb{T})$ of $\mathbb{T}$, $S = V(\mathbb{T}) \setminus P$).
Every MST on $P \cup S$ is a minimum $k$-Steiner tree on $P$ with respect
to $\alpha$.
\end{corollary}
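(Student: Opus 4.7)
The plan is to reduce the claim to two easier statements. Let $M := \min_{\tau',S'}\alpha(\textbf{e}_{\tau',P,S'})$ be the minimum $\alpha$-value, so by hypothesis $\alpha(\textbf{e}_{\tau, P, S}) = M$, where $\tau$ and $S$ are the topology and Steiner set of $\mathbb{T}$. Fix any MST $T'$ on $P \cup S$ with topology $\tau'$. The goal is to establish $\alpha(\textbf{e}_{\tau', P, S}) = M$, and the inequality $\alpha(\textbf{e}_{\tau', P, S}) \geq M$ is immediate from the definition of $M$ as a global minimum, so only the reverse inequality requires work.

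I would argue the reverse inequality in two steps. First, I would show that $\mathbb{T}$ is itself an MST on $P \cup S$: since $\mathbb{T}$ is globally $\alpha$-minimum, it is in particular $\alpha$-minimum among all spanning-tree topologies on the fixed vertex set $P \cup S$, so if passing to the MST topology $\tau'$ strictly decreased $\alpha$, it would contradict the optimality of $\mathbb{T}$. Second, I would invoke the standard matroid fact that all MSTs on the same vertex set share the same multiset of edge weights. Combining these two steps with the symmetry of $\alpha$ (so that $\alpha$ depends only on the edge-weight multiset), every MST on $P \cup S$ must attain the same cost as $\mathbb{T}$, namely $M$, completing the proof.

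The main obstacle is carrying out the first step cleanly: proving that $\alpha$-minimality on a fixed vertex set implies $\alpha_1$-minimality (i.e., MST) on that vertex set, which is not literally packaged into the existential statement of $\ell_1$-optimizability. The plan is to prove this slightly stronger fact using the structure of the admissible $\alpha$. For $\alpha_p$ with $p > 0$, Kruskal's algorithm applied to edge weights $\Vert e \Vert^p$ yields the same tree as when applied to $\Vert e \Vert$ because $x \mapsto x^p$ is monotonic on $\mathbb{R}_{\geq 0}$; hence the MST simultaneously minimizes $\alpha_1$ and $\alpha_p$ on any fixed vertex set, so any $\alpha_p$-optimal topology on $P \cup S$ must be an MST. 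For $\alpha_{\infty}$, this is the classical bottleneck property of the MST. Once this implication is established for the cost function at hand, the multiset-symmetry argument of the previous paragraph closes the gap and yields the corollary.
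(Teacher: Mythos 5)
The paper offers no proof of this statement --- it is imported verbatim as Corollary~2 of Brazil et al.\ \cite{brazil2015generalised} --- so your attempt can only be judged on its own terms. The essential ingredient you identify is the last one: all minimum spanning trees on a fixed weighted vertex set have the same multiset of edge weights, so a symmetric $\alpha$ takes the same value on all of them, and the lower bound $\alpha(\textbf{e}_{\tau',P,S})\ge M$ is indeed free. The problem is your bridge, the claim that $\alpha$-minimality over spanning trees of $P\cup S$ forces the tree to be an MST of $P\cup S$. For $\alpha_\infty$ you justify this by ``the classical bottleneck property of the MST,'' but that property runs the other way: every MST is a minimum bottleneck spanning tree, while a minimum bottleneck spanning tree need not be an MST. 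So an $\alpha_\infty$-optimal $\mathbb{T}$ need not be an MST on $P\cup S$, your multiset step cannot be applied to it, and the argument breaks for exactly one of the two families of cost functions you set out to cover. The repair is to reverse the direction of comparison: show that an MST $T'$ on $P\cup S$ is $\alpha$-minimal among all spanning trees on $P\cup S$ (for $\alpha_p$ because Kruskal run on the weights $\Vert e\Vert^p$ selects the same trees as Kruskal on $\Vert e\Vert$; for $\alpha_\infty$ by the bottleneck property used in its valid direction), conclude $\alpha(\textbf{e}_{\tau',P,S})\le\alpha(\textbf{e}_{\tau,P,S})=M$, and combine with the trivial lower bound.

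A second gap: the corollary is asserted for every symmetric $\ell_1$-optimizable $\alpha$, and your bridging lemma is false at that level of generality --- a constant function is symmetric and $\ell_1$-optimizable, yet every spanning tree on $P\cup S$ is $\alpha$-minimal and almost none are MSTs. Your case analysis covers only $\alpha_p$ and $\alpha_\infty$, which is where the list of examples in \cref{def: l1 opt} stops but not where the statement stops. Nothing in the purely existential definition of $\ell_1$-optimizability lets you transfer the MST property from the guaranteed witness $(\tau^*,S^*)$ to the Steiner set $S$ of an \emph{arbitrary} $\alpha$-optimal $\mathbb{T}$, so a fully general proof must either invoke additional structure from \cite{brazil2015generalised} or settle for the existential version (some $\alpha$-optimal tree has a Steiner set $S^*$ for which every MST on $P\cup S^*$ is $\alpha$-optimal, obtained directly from the witness plus your multiset argument), which is all that the application in \cref{sec: cost} actually needs.
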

It follows from \cref{theorem: main result} that \cref{alg: algo}
computes a tree with topology $\tau^{\spadesuit}$ and set 
of Steiner points $S^{\spadesuit}$ of at most one element such that 
$\alpha_1(\textbf{e}_{\tau^{\spadesuit},P,S^{\spadesuit}}) = \min_{\tau,S}\alpha_1(\textbf{e}_{\tau,P,S})$.
From the definition of $\ell_1$-optimizable, it follows that 
$\alpha(\textbf{e}_{\tau^{\spadesuit},P,S^{\spadesuit}}) = \min_{\tau, S}\alpha(\textbf{e}_{\tau,P,S})$.
By \cref{cor: mst also gen k st tree}, 
it follows that our MStT is also the solution that minimizes the weight
of the solution with respect to cost function $\alpha$.
A similar argument holds for
the algorithm of \cref{sec: k st points} and
\cref{corollary: j norm curves}.
This discussion leads to \cref{corollary: other costs}.

\generalCorollary

\clearpage

\subsection{Curving \texorpdfstring{$\gamma$}{the Input gamma}}
\label{sec: from lines to curves}

Examining \cref{alg: algo}, we note that there are two parts involving 
$\gamma$: computing labelled intervals along $\gamma$,
and computing an optimal Steiner point for each interval.
By examining the properties of lines that were being exploited to perform
certain operations in $O(1)$ time, we realize our results are 
applicable to a more general set of input constraints than
lines.
We now let $\gamma$ be more general than a line.
Abusing notation, we will refer to $\gamma$ as a \emph{curve}, 
even if it has endpoints 
(in which case it may be closer to an \emph{arc}).
We keep our general position assumption that though $\gamma$ may intersect
the rays and segments of the OVDs multiple times, each intersection is a 
single point.

\subsubsection*{Computing Labelled Intervals on \texorpdfstring{$\gamma$}{the Input gamma}}

\begin{description}
	\item[\textbf{Compute Intersections of $\gamma$ with Lines/Rays/Segments}]
 ~\\
The first property we used was the zone theorem 
which said that our input line
intersects an arrangement of $O(n)$ lines (i.e., the OVD) $O(n)$ times.
Let $O(g)$ be the number of 
times $\gamma$ intersects an arrangement
of $O(n)$ lines.
Since we are intersecting $\gamma$ with six OVDs, the number of intervals
created on $\gamma$ is $O(g)$.
We also used the fact that each intersection can be computed in $O(1)$
time and space.
Let $O(h)$ be the time and $O(h_{sp})$ be the space
it takes to compute an intersection between $\gamma$ and a line.
Then the time to compute all of the intersection points on $\gamma$ is 
$O(gh)$ and the space is $O(g + h_{sp})$.

\begin{itemize}
	\item $O(g)$: zone of $\gamma$ in an arrangement of $O(n)$ lines
	\item $O(h)$: time to compute the intersection of $\gamma$ and a line
	\item $O(h_{sp})$: space to compute the intersection of $\gamma$ and a line
	\item $O(gh)$: time to compute all of the intersection points on $\gamma$
	\item $O(g+h_{sp})$: space to compute all of the intersection points on $\gamma$ 
\end{itemize}

We use $O(\beta)$ to denote the time it takes to compute the intersection
points of $\gamma$ with the OVDs (i.e., $\beta = gh$), and we use 
$O(\beta_{sp})$ to denote the space.

	\item[\textbf{Create Intervals}]
 ~\\
	Those computed intersection points are actually the endpoints of 
	intervals on $\gamma$.
Next, we sorted the interval endpoints to construct the intervals using
the fact that the line was $x$-monotone.
Let $m$ be the time and $m_{sp}$ be the space it takes
to break $\gamma$ up into $c$ 
univariate polynomial functions.
For example, we can break a circle into two $x$-monotone components, each
starting at the left-most point and going to the right-most point, but
one moving clockwise and the other counterclockwise.
Overall, all of these components contain $O(g + c)$ 
points to be sorted by $x$-coordinate,
so we spend $O((g+c)\log (g+c))$ time and 
$O(g + c)$ space to sort the interval endpoints. 

Once we have sorted the points, we need to
figure out in which of our $c$ $x$-monotone components
each point belongs.
Let $w$ be the most time and $w_{sp}$ be the most space
required to evaluate one of our $c$ functions at a given $x$-coordinate.
Then we create the $O(g + c)$
intervals in $O(c(g + c)w)$ time and $O(g + c + w_{sp})$ space
as follows.
For each of our $c$ components, we walk along our sorted list of points, 
evaluate the current component's function at the $x$-coordinate of the
next point in the sequence, and compare the result against the $y$-coordinate
of the tested point.
If it matches, then this point will be an interval endpoint on this
component of $\gamma$.
In this way, we separate
the interval endpoints into the components in which they exist on $\gamma$.
Then we walk over the resulting lists to match interval endpoints 
(i.e., decide where intervals start and end).

\begin{itemize}
	\item $c$: number of univariate polynomial functions into which $\gamma$ is decomposed
	\item $O(m)$: the time to decompose $\gamma$ into $c$ univariate polynomial functions
	\item $O(m_{sp})$: the space to decompose $\gamma$ into $c$ univariate polynomial functions
	\item $O(g + c)$: the number of intervals into which $\gamma$ is decomposed
	\item $O((g+c)\log (g+c))$: the time to sort the interval endpoints 
	\item $O(g + c)$: the space to sort the interval endpoints
	\item $O(w)$: the most time to evaluate one of our $c$ functions at a given $x$-coordinate
	\item $O(w_{sp})$: the most space to evaluate one of our $c$ functions at a given $x$-coordinate
	\item $O(c(g + c)w)$: the time to create the intervals
	\item $O(g + c + w_{sp})$: the space to create the intervals
\end{itemize}
		
\end{description}

We use $O(\lambda)$ to denote the time it takes to compute the $O(g+c)$ 
intervals on $\gamma$ 
and we use $O(\lambda_{sp})$ to denote the space
(i.e., $O(m + (g+c)\log (g+c) + c(g + c)w)$ time and $O(m_{sp} + g + c + w_{sp})$ space).

\subsubsection*{Computing Optimal Steiner Points in the Intervals}

Lastly, we took advantage of the fact that in
$O(1)$ time and space we could find our candidate points in each
interval by computing the zeroes of the derivative of the distance
function $d_{\mathbb{P}}(x)$ that sums the distances from the points 
associated with the interval to the point $x$ on the line.

As before, consider an interval and its set of 
potential neighbours $P' \subset P$ of
size at most six.
For each subset $\mathbb{P}$ of $P'$ of size three to six (of which
there are $O(1)$ such subsets),
we compute 
a candidate optimal Steiner point
in the interval
on $\gamma$.
Although we have $c$ functions representing $\gamma$, 
here we abuse notation and refer to them all as $\gamma$.
Recall that $\gamma$ is a function whose parameter corresponds to the
$1$-dimensional position along $\gamma$.
The formula for the distance function will depend on the norm.
For example, with the Euclidean norm,
the sum-of-distances  function  looks like it did before:
$d_{\mathbb{P}}(x) = \sum_{a \in \mathbb{P}}\sqrt{(a_x - x)^2 + (a_y - \gamma (x))^2}$.
Over all intervals on $\gamma$, let $t$ be the most time and 
$t_{sp}$ be the most space used to find a candidate Steiner point
for a given interval.
This involves finding the zeroes of the derivative of $d_{\mathbb{P}}(x)$
and thus depends on the degree of the function $\gamma(x)$ and the norm being used.
We can get a simple upper bound on the amount of time and space
used to compute the candidate optimal Steiner points 
and the sums of the edge-lengths 
for edges incident to those Steiner points 
in the MStT containing those candidates:
$O(t(g + c))$ time and $O(t_{sp} + g + c)$ space.
Let $q$ be the most time and $q_{sp}$ be the most space used to 
perform step 
\ref{k st solve fixed topology steiner problem} 
when restricting the Steiner points to
lie on $\gamma$ (i.e., 
solve the fixed topology $k$-Steiner tree problem).
Note that step \ref{k st update MST info} 
(i.e., computing the minimum $F$-fixed spanning tree)
is performed using operations on weighted graphs 
and so is not affected by changing $\gamma$ to a curve.

\begin{restatable}{corollary}{GeneralCorollaryTwo}
\label{corollary: general}
Given:
\begin{itemize}
    \item a set $P$ of $n$ points in $\mathbb{R}^2$;
    \item a norm $\Vert \cdot \Vert$
that is compliant to Restrictions \ref{rest:one}, \ref{rest:two}, 
and \ref{rest:three};
    \item a symmetric $\ell_1$-optimizable cost function $\alpha$;
    \item a curve $\gamma$ with space complexity $\gamma_{sp}$
\end{itemize}
\cref{alg: algo} computes 
in $O(n\log n + \beta + \lambda + t(g + c))$ time 
and $O(n + \gamma_{sp} + \beta_{sp} + \lambda_{sp} + t_{sp})$ 
space 
a minimum-weight tree
with respect to $\alpha$ and $\Vert \cdot \Vert$ that  connects all points in $P$ 
using at most one extra point $s \in \gamma$.

For a constant integer $k>1$, 
the algorithm of \cref{sec: k st points} solves
the restricted $k$-Steiner tree problem in
$O((g+c)^{k}q + \beta + \lambda + n\log n)$ time 
and $O(n + \gamma_{sp} + \beta_{sp} + \lambda_{sp} + q_{sp})$ space
with a Steiner set $S$ of at most $k$
points from $\gamma$.
\end{restatable}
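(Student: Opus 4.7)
The plan is to walk through \cref{alg: algo} and the modified algorithm of \cref{sec: k st points}, substituting each curve-dependent quantity defined in \cref{sec: from lines to curves} for the constants that appeared when $\gamma$ was a line, and summing the resulting time and space costs. Since the norm-dependent preprocessing is independent of $\gamma$, by \cref{lemma: mstUpdate} and \cref{thm: abstract VDs} (applicable under Restrictions \ref{rest:one}, \ref{rest:two}, and \ref{rest:three}) we can compute $T = \operatorname{MST}(P)$, the longest-edge auxiliary tree $T'$, and the six OVDs in $O(n \log n)$ time and $O(n)$ space.

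Next I would account for the $\gamma$-dependent portion of \cref{alg: algo}. Intersecting $\gamma$ with the $O(n)$ rays and segments of each of the six OVDs contributes $O(\beta)$ time and $O(\beta_{sp})$ space by the definition of $\beta = gh$. Assembling these intersection points into $O(g+c)$ labelled intervals on $\gamma$, decomposed via the $c$ monotone polynomial pieces, costs $O(\lambda)$ time and $O(\lambda_{sp})$ space as already unpacked in \cref{sec: from lines to curves}. Inside each interval, there are $O(1)$ subsets of three to six potential neighbours, and for each we find candidate optimal Steiner points by zeroing the derivative of the sum-of-distances function restricted to $\gamma$; by definition this costs at most $O(t)$ time and $O(t_{sp})$ space per interval, so $O(t(g+c))$ and $O(t_{sp} + g + c)$ aggregated. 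Comparing each candidate against the current best via $\Delta_s$ and $\sigma_s$ still runs in $O(1)$ time per candidate using LCA queries on $T'$ (\cref{lemma: mstUpdate}). Summing yields the claimed $O(n\log n + \beta + \lambda + t(g + c))$ time and $O(n + \gamma_{sp} + \beta_{sp} + \lambda_{sp} + t_{sp})$ space bound. Correctness of the minimum with respect to $\alpha$ (rather than only with respect to $\alpha_1$) follows from the $\ell_1$-optimizability argument given after \cref{cor: mst also gen k st tree}, exactly as it did for \cref{corollary: other costs}.

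For $k>1$, I would adapt the $k$-Steiner algorithm of \cref{sec: k st points} in the same way. The preprocessing (MST, auxiliary tree for LCA queries answering both longest-edge and edge-on-path queries via \cref{lemma: mstUpdate} and \cref{lemma: edge on path query}, and the six OVDs) remains $O(n\log n)$ time and $O(n)$ space, and the one-time $\gamma$-carving contributes the additive $O(\beta + \lambda)$ time and $O(\beta_{sp} + \lambda_{sp})$ space. The main loop, previously enumerating combinations with repetition from $O(n^2)$ overlaid-OVD faces, instead enumerates combinations with repetition of $k' \le k$ intervals chosen from the $O(g+c)$ intervals on $\gamma$; the same Stirling calculation as in \cref{sec: k st points} yields $O((g+c)^k)$ iterations. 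Each iteration solves a fixed-topology $k$-Steiner tree with Steiner points constrained to $\gamma$ in $O(q)$ time and $O(q_{sp})$ space, and performs the minimum $F$-fixed spanning tree update in $O(1)$ time and space (using LCA queries rather than the Brazil \etal preprocessing tables). Multiplying gives $O((g+c)^{k} q)$ and adding the preprocessing contributions produces the stated bounds.

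The only non-routine step is the one absorbed into the definition of $q$, namely confirming that the fixed-topology $k$-Steiner tree problem restricted to $\gamma$ is a well-posed primitive analogous to \cref{rest: our four}; everything else is additive bookkeeping over parts of the algorithm whose correctness was already established in \cref{theorem: main result}, \cref{corollary: k steiner points}, and the norm/cost-function extensions of \cref{sec: norms,sec: cost}. The resulting two bounds are exactly those claimed in the corollary.
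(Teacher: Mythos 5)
Your proposal is correct and follows essentially the same route as the paper, which establishes \cref{corollary: general} by the bookkeeping discussion preceding it in \cref{sec: from lines to curves}: substitute the curve-dependent parameters $\beta, \lambda, g, c, t, q$ (and their space analogues) for the $O(1)$ and $O(n)$ quantities that arose when $\gamma$ was a line, reuse the norm/cost-function arguments of \cref{sec: norms,sec: cost}, and replace the enumeration over overlaid-OVD faces by an enumeration over the $O(g+c)$ intervals for the $k>1$ case. Your closing observation that the fixed-topology subproblem restricted to $\gamma$ is simply absorbed into the parameter $q$ matches how the paper treats it.
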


This means, for example, that when $\gamma$ is a curve such as a conic
(e.g., an ellipse, or a hyperbola), or a constant-degree polynomial function, 
or a curve that can be split into $O(1)$ constant-degree polynomial
functions and $O(1)$ loops/self-intersections that can be decomposed in $O(1)$ time and space into $O(1)$
univariate functions described by polynomials of constant degree 
(such as the \textit{Folium of Descartes}),
\cref{alg: algo} still runs in $O(n\log n)$ time\footnote{Once again, we assume that, 
using the univariate polynomial functions into which we have decomposed 
$\gamma(x)$ in the equations outlined in the previous sections, 
the derivatives and roots required can be computed in constant time and space.}
and $O(n)$ space
with a norm $\Vert \cdot \Vert$
that is compliant to Restrictions \ref{rest:one}, \ref{rest:two}, 
and \ref{rest:three}, such as the Euclidean norm,
and an $\ell_1$-optimizable cost function $\alpha$.

We can also extend our result to the problem in which we are given a set of 
$j$ curves $\Gamma = \{\gamma_{1}, \ldots \gamma_{j}\}$.
Let the runtime for computing the intervals on $\gamma_i \in \Gamma$ 
be $O(\beta_i + \lambda_i)$ and the space be $O(\beta_{sp,i} + \lambda_{sp,i})$, 
let $\mu$ be $\max_i(\beta_i + \lambda_i)$, and let $\mu_{sp}$
be $\max_i(\beta_{sp,i} + \lambda_{sp,i})$.
Let $(g+c)$ be the maximum number of intervals into which any $\gamma_i$ is decomposed.
Let $t$ be the most time and $t_{sp}$ be the most space used 
to find a candidate Steiner point over all intervals
of all $\gamma_i \in \Gamma$.
Let $q$ be the most time and $q_{sp}$ be the most space used to 
solve the fixed topology $k$-Steiner tree problem
when restricting the Steiner points to
lie on curves in  $\Gamma$.

\finalCorollary

\section{Conclusion}
\label{sec: conclusion}
We showed that given a set $P$ of $n$ points in $\mathbb{R}^2$ 
and a line $\gamma$ in $\mathbb{R}^2$,
\cref{alg: algo} computes in optimal $\Theta(n\log n)$ time and 
optimal $\Theta(n)$ space a minimum-weight tree connecting 
all points in $P$ using at most one  extra point  $s \in \gamma$.
We noted that this result can extend to a set
$\Gamma = \{\gamma_{1}, \ldots \gamma_{j}\}$ of $j$ lines,
and that by running \cref{alg: algo} for each $\gamma \in \Gamma$, 
in $O(jn\log n)$ time and $O(n + j)$ space 
a minimum-weight tree is computed that connects all points in $P$ 
using at most one  extra point 
$s \in \bigcup_{i=1}^{j} \gamma_i$.

Next we reviewed the algorithm to solve the $k$-Steiner tree problem
presented by Brazil et al.\ \cite{brazil2015generalised} that 
runs in $O(n^{2k})$ time and $O(n^2)$ space for $k=1$,
and $O(n^3)$ space for $k>1$.
We showed that we can replace one of their $O(n^2)$ 
time and space steps and one of their $O(n^3)$ time and space
steps with preprocessing steps
that require $O(n\log n)$ time and $O(n)$ space.
This replacement does not change the asymptotic time and
space complexities for the $1$-Steiner tree algorithms
of Georgakopoulos and Papadimitriou \cite{oneStTreeProb} 
or Brazil \etal\ \cite{brazil2015generalised},
nor the time complexity of the $k$-Steiner tree algorithm
of Brazil \etal\ \cite{brazil2015generalised} for $k>1$, but it lowers
the space complexity from $O(n^3)$ to $O(n^2)$ for $k>1$.
We then showed how to adapt the $k$-Steiner tree
algorithm to our setting (i.e., where the Steiner points can only
be chosen from a set of $j$ input lines), allowing us to solve our restricted
$k$-Steiner tree problem in $O((jn)^k)$ time and $O(jn)$ space.

We then pointed out that the more general results of 
Brazil \etal\ \cite{brazil2015generalised} also apply to our scenario,
thereby extending our results to the same norms and cost functions 
as the algorithm of Brazil \etal, abiding by the restrictions
they laid out.

Lastly, we showed that our results apply when the set of input restrictions is
not a set of lines, but a set of $j$ curves.
In this case, the running time and space bounds come to depend on $j$, the 
runtime and space of certain primitive operations, and the complexity of the
zone of the curves in an arrangement of lines.

It is an open question whether or not the unrestricted 
$1$-Steiner tree problem in $\mathbb{R}^2$
can be solved in $o(n^2)$ time by 
using multiple constraint lines and the results
of \cref{alg: algo} to guide a search.

\section*{Acknowledgements}
The authors thank Jean-Lou De Carufel for helpful discussions.

\clearpage

\bibliographystyle{plain}
\bibliography{restricted-steiner}

\end{document}